\newcounter{mnotecount}[section]
\theoremstyle{definition}
\newtheorem{thm}{Theorem}
\newtheorem{lem}{Lemma}
\newtheorem{prop}{Proposition}
\newtheorem{Remark}{Remark}
\newcommand{\metric}{{\bf g}}
\title{Global solutions to the spherically symmetric Einstein-scalar field system with a positive cosmological constant in Bondi coordinates}
\author{Jo\~ao L. Costa$^{(2,1)}$ and Filipe C. Mena$^{(1,3)}$\\\\
{\small $^{(1)}$Centro de An\'alise Matem\'atica, Geometria e Sistemas Din\^amicos,}
\\
{\small	Instituto Superior T\'ecnico, Universidade de Lisboa, Av. Rovisco Pais 1, 1049-001 Lisboa, Portugal}\\
{\small $^{(2)}$Instituto Universit\'ario de Lisboa (ISCTE-IUL), Av. das For\c{c}as Armadas, 1649-026 Lisboa, Portugal}\\
{\small $^{(3)}$Centro de Matem\'atica, Universidade do Minho, 4710-057 Braga, Portugal}\\
}
\begin{document}

\maketitle

\begin{abstract}
We consider a characteristic initial value problem, with initial data given on a future null cone, for the Einstein (massless) scalar field system with a positive cosmological constant, in Bondi coordinates.
We prove that, for small data,
this system has a unique global classical solution which is causally geodesically complete to the future and decays polynomially in radius and exponentially in Bondi time,  approaching the de Sitter solution.  
\end{abstract}
\section{Introduction}

In this paper we revisit the study of global solutions to the Einstein-scalar field system with a positive cosmological constant $\Lambda>0$, which are triples $(M,\metric,\phi)$, with $(M,\metric)$ a (1+3)-dimensional Lorentzian manifold and $\phi:M\rightarrow\mathbb{R}$ a function, satisfying
\begin{equation}
\label{ELS}
\left\{
\begin{array}{l}
R_{\alpha\beta}-\frac{1}{2}R g_{\alpha\beta}+\Lambda g_{\alpha\beta}=T_{\alpha\beta} \\
T_{\alpha\beta}=\partial_\alpha\phi\partial_\beta \phi-\frac{1}{2} g_{\alpha\beta} g^{\gamma\mu} \partial_\gamma\phi\,\partial_\mu \phi\; \\
\square_{\metric} \phi=0\;,
\end{array}
\right.
\end{equation}
where $R_{\alpha\beta}$, $R$ and $\square_{\metric}$ are, respectively, the Ricci tensor, the scalar curvature and the d'Alembert operator of the metric $\metric$.
This system provides an important toy model for the study of gravitational collapse which retains relevant dynamical degrees of freedom under the assumption of spherical symmetry. The inclusion of the cosmological constant term is motivated by the current standard model of cosmology.

Here we will consider small data solutions under the assumption of Bondi-spherical symmetry, i.e., restricting to metrics of the form
\begin{equation}
\label{metricBondi}
 \metric=-f(u,r)\tilde{f}(u,r)du^{2}-2f(u,r)dudr+r^{2}\sigma_{\mathbb{S}^2}\;,
\end{equation}
where $\sigma_{\mathbb{S}^2}$ is the metric of the unit round two sphere. The system~\eqref{ELS} then becomes
\begin{equation}
\label{BondiSys}
\left\{
\begin{array}{l}
\partial_r f=\frac{1}{2} r f (\partial_r \phi)^2 \\
\partial_r(r\tilde f)=f(1-\Lambda r^2) \\
\left(\partial_u-\frac{1}{2}\tilde f\partial_r\right)\partial_r(r\phi)=\frac{r}{2}\partial_r\tilde f\partial_r\phi\;.
\end{array}
\right.
\end{equation}
Since $u=0$ is a null hypersurface, a natural initial value problem for the previous system is a characteristic initial value problem.
In the asymptotically flat case one has $\Lambda=0$, and the corresponding problem was extensively addressed by Christodoulou (see the introduction in~\cite{Christodoulou-book} for a detailed overview), who initiated the study with the construction of small data dispersive~\footnote{By which we mean future geodesically complete solutions, with empty black hole region, that either approach Minkowski, if $\Lambda=0$, or de Sitter, if $\Lambda>0$.} classical solutions \cite{Christodoulou:1986}. 

Other contributions to the study of dispersive solutions of~\eqref{ELS} in the asymptotically flat case, include the seminal proof of the non-linear stability of Minkowski by Lindblad and Rodnianski~\cite{Lindblad-Rodnianski}, as well as the construction of spherically symmetric dispersive solutions with large BV norms in~\cite{Luk}. In the cosmological setting, Ringstr\"om \cite{Ringstrom} considered the Einstein-scalar field system with a positive potential and proved the exponential decay of (non-linear) perturbations in de Sitter cosmologies. Nonetheless, although far-reaching, those results do not apply to the case of a (non-vanishing) massless scalar field with a positive cosmological constant and, therefore, do not include the case studied in the present paper. 

The study of global properties of solutions to various Einstein-matter equations with a positive cosmological constant has both a long and prestigious tradition as well as a recent remarkable amount of activity, motivated by its rich mathematical structure and relevant physical content. We cannot do justice to the entire literature on the subject here, so we simply refer the interested reader to~\cite{Alho, Andreasson, Chae, dafRen, fajman, Friedrich, Fried-dust, Hadzic, hintzVasy, Hintz, Valiente,  Nungesser, Rad, Todd, Rendall, Rod-Speck, schlue, schlue2, Speck} and references therein.

In~\cite{CostaProblem}, the global existence of solutions to~\eqref{BondiSys} in Bondi time $u$ and their exponentially decay was proved for small initial data. However the results there suffer the undesirable feature of being restricted to a finite radial range.
In fact, by including a positive cosmological constant $\Lambda$ in the Bondi-Christodoulou setting, the apparently simple changes to the PDE system give rise, as expected, to a radical change in the global structure and asymptotic behavior of solutions, which has to be taken into account already at the level of the characteristic initial data imposed on $u=0$. Moreover, further difficulties arise in the analysis of \eqref{BondiSys} as a consequence of $\Lambda$. These complications were latter discussed in~\cite{CostaReview}, where new ideas to overcome them were suggested.  

We will briefly discuss the most relevant of such difficulties: (i) the zeroth order coefficient $\partial_r \tilde f$ decays radially for $\Lambda=0$, but grows linearly for $\Lambda>0$; (ii) the scalar field $\phi$ does not decay to zero when $r\rightarrow \infty$ but instead approaches a limiting function $u\mapsto\underline{\phi}(u)$; (iii) there are characteristics that reach infinity in arbitrarily small Bondi time; (iv) the ``most natural'' iteration schemes for the wave equation do not seem to allow the necessary control over the radial derivatives $\partial^2_r(r\phi)$, which, in turn, are central quantities in the Bondi-Christodoulou setting.

We now revise the strategy developed in this paper to overcome the referred obstacles: First, we carefully choose the radial decay rate of derivatives of initial data, which we take to be of $O(r^{-2+\delta})$, corresponding to a $\delta$ loss of the sharp decay rate of linear waves in de Sitter~\cite{Natario-Sasane}. Then, we derive a new a priori estimate along characteristics that allows us to regain one power of radial decay and deal with (i) and (iii). Afterwards, in view of (ii) and (iv), instead of trying to solve the wave equation directly we first solve the equation obtained by commuting the wave equation with $\partial_r$. This is reminiscent of the linear wave equation in de Sitter where an analogous commutation gives rise to a simpler equation~\cite{CostaSpherically}. Furthermore, it requires the control of an extra radial derivative and, therefore, the need to impose one extra degree of regularity at the level of initial data, which is nevertheless preserved by the evolution. Then, to integrate back to a solution of the wave equation we need to have appropriate boundary data along $r=0$ which we derive by first solving~\eqref{BondiSys} locally, both in Bondi time and radius. This procedure gives rise to solutions which are global in radius but local in Bondi time. Finally, to construct global solutions we use a bootstrap argument based on $L^{\infty}$-energy estimates which also reveal the asymptotic behavior of the solutions. We refer to Theorem~\ref{main-thm} for a precise statement of our main results. 

We finish this introduction by noting that, in principle, there is another natural strategy to solve the system~\eqref{ELS} in spherical symmetry: One first uses Bondi coordinates (see Remark~\ref{remReg}) to solve the system in a region $[0,\infty[\times[0,R]\times \mathbb{S}^2$, with $R$ sufficiently large so that the region includes the entire cosmological horizon. In order to do this, we can simply invoke the results in~\cite{CostaProblem}. Then, we change to appropriate double null coordinates and use the previous (local in radius but global in Bondi time) solution as a source of initial data along the cosmological horizon. In principle, although it remains to be checked, one could adapt the techniques in~\cite{Costa-Natario-Oliveira} to solve the system to the future of the cosmological horizon and, by doing so, we might be able to circumvent some of the previously discussed obstacles. 

Here, however, we rely solely on Bondi coordinates since they provide an unified setting to solve~\eqref{ELS}, they reveal relevant features of the solutions, see for instance the peeling decaying properties in equations~\eqref{phi-convergence}--\eqref{metric-convergence3}, and mostly because we wanted to face the mathematical challenge of overcoming the difficulties posed by the system~\eqref{BondiSys}.                 

\section{Setup and main result}
\label{sectionSetup}

Let $(M, \metric)$ be a spacetime with metric $\metric$ and consider the Einstein equations as
\begin{equation}
\label{EFE}
R_{\alpha\beta}-\frac{1}{2}R g_{\alpha\beta}+\Lambda g_{\alpha\beta}=T_{\alpha\beta},
\end{equation}
where, as usual, $R_{\alpha\beta}$ and $T_{\alpha\beta}$ denote the components of the Ricci tensor and stress energy tensor, respectively, with the greek indices being spacetime indices and $R$ is the Ricci scalar.

We assume a stress energy tensor for a massless scalar field $\phi$  given by
$$
T_{\alpha\beta}=\partial_\alpha\phi\partial_\beta \phi-\frac{1}{2}g_{\alpha\beta} g^{\gamma\mu} \partial_\gamma\phi\,\partial_\mu \phi\;.
$$
Then,
equation \eqref{EFE} reduces to
\begin{equation}
\label{EFE-two}
R_{\alpha\beta}=\partial_\alpha\phi\partial_\beta \phi+\Lambda g_{\alpha\beta}
\end{equation}
and the energy conservation equation $\nabla^\alpha T_{\alpha\beta}=0$ is equivalent to the following wave equation
\begin{equation}
\label{conservation-energy}
\nabla^\alpha\partial_\alpha\phi=0.
\end{equation}
Now, a spacetime is said to be {\em Bondi-spherically symmetric} if it admits a global representation for the metric of the form
\begin{equation}
\label{metricBondi}
 \metric=-f(u,r)\tilde{f}(u,r)du^{2}-2f(u,r)dudr+r^{2}\sigma_{\mathbb{S}^2}\;,
\end{equation}
where $f$ and $\tilde f$ are functions to be determined through the Einstein equations, $\sigma_{\mathbb{S}^2}$ is the round metric of the two-sphere and $r(p):=\sqrt{\text{Area}({\mathcal O}_p)/4\pi}$ the {\em radius function}, where ${\mathcal O}_p$ is the orbit of an $SO(3)$ action by isometries through $p$. The $u$-coordinate is  known as {\em Bondi time} and the future null cones of points at $r=0$ are given by $u=constant$. Note that we have the (gauge) freedom to rescale  the Bondi time $u\mapsto w(u)$ using any increasing and continuously differentiable function $w$.  In view of this freedom we will identify any two Bondi-spherically symmetric spacetimes that differ by such a rescaling.
The coordinates in \eqref{metricBondi} are called {\em Bondi coordinates} and are such that
$$
(u,r)\in [0,U[\times [0,R[, \qquad U,R\in \mathbb{R}^+\cup\{+\infty\}.
$$
Taking \eqref{metricBondi} into \eqref{EFE-two}, we obtain
\begin{equation}
\label{partialfoverf}
\frac{2}{r}\frac{1}{f}\partial_r f=(\partial_r \phi)^2
\end{equation}
as well as
\begin{equation}
\label{drrftilde}
\partial_r(r\tilde f)=f(1-\Lambda r^2),
\end{equation}
while \eqref{conservation-energy} gives
\begin{equation}
\label{new-last-eq}
\frac{
1}{r}\left(\partial_u-\frac{\tilde f}{2}\partial_r\right)\partial_r(r\phi)=\frac{1}{2}(\partial_r\tilde f)(\partial_r\phi).
\end{equation}
As a special case, we recall that the causal future of any point in de Sitter spacetime may be covered by Bondi coordinates with $f^{\mathrm {dS}}(u,r)=1$ and $\tilde f^{\mathrm {dS}}(u,r)=1-\Lambda r^{2}/3$, i.e. with the metric
\begin{equation}
\metric^{\mathrm {dS}}=-\left(1-\frac{\Lambda}{3}r^{2}\right)du^{2}-2dudr+r^{2}\sigma_{\mathbb{S}^2}\;.
\label{dSBondi}
\end{equation}
We now summarise the main result of this paper whose proof is in Section \ref{final-sec}:
\begin{thm}
\label{main-thm}
Let $\Lambda>0$, $0<\delta<1/2$, $k\in\mathbb{Z}^+$ and let $\phi_0\in C^{k+2}([0,+\infty[)$ be  such that
$$\sup_{r\geq 0} \left( |\partial_r(r\phi_0)(r)|+|(1+r)^{2-\delta}\partial^2_r(r\phi_0)(r)| + |(1+r)^{3-\delta}\partial^3_r(r\phi_0)(r)| \right)<\infty\;.$$
There exists $\varepsilon_0=\varepsilon_0(\Lambda,\delta)>0$  such that, if
\begin{equation}
\label{smallness}
\sup_{r\geq 0 } |(1+r)^{2-\delta}\partial^2_r(r\phi_0)(r)|< \varepsilon_0 \;,
\end{equation}
then there exists a unique solution $(M=\mathbb{R}^+_0\times\mathbb{R}^+_0\times \mathbb{S}^2, \metric,\phi)$ for the system \eqref{EFE-two}--\eqref{conservation-energy}, with $(M,\metric)$ a $C^{k+1}$ Bondi-spherically symmetric spacetime and $\phi\in C^{k+2}(M)$ satisfying the initial condition
$$
\phi (0,r,\omega)=\phi_0(r), \text{ for all } r\geq 0 \text{ and } \omega\in\mathbb{S}^2\;.
$$
Moreover:
\begin{enumerate}
\item If we set the Bondi time to be the proper time of the observer at the center of symmetry, i.e., if we set   $f(u,r=0)\equiv 1$, then there exists a continuous function $\underline \phi:[0,+\infty[ \rightarrow \mathbb{R}$ such that
\begin{equation}
\label{phi-convergence}
|\phi(u,r,\omega)-\underline \phi(u) | \lesssim \frac{1}{(1+r)^{1-\delta}} e^{-(1+\delta/2)\sqrt {\Lambda/3}\,u}\;,
\end{equation}
\begin{equation}
\label{phi-convergence2}
|\partial_r\phi(u,r,\omega) | \lesssim \frac{1}{(1+r)^{2-\delta}} e^{-(1+\delta/2)\sqrt {\Lambda/3}\,u}\;,
\end{equation}
\begin{equation}
\label{phi-convergence2}
|\partial^2_r\phi(u,r,\omega) | \lesssim \frac{1}{(1+r)^{3-\delta}} e^{-(1+\delta/2)\sqrt {\Lambda/3}\,u}\;.
\end{equation}
Also, there exists $\underline{\phi}(\infty)\in\mathbb{R}$ such that, given $R>0$, if $\varepsilon_0 \leq\underline{\varepsilon}(R)$, with the later sufficiently small, then there exists a constant  $C_R>0$ such that 
\begin{equation}
\label{phi-convergenceR}
\sup_{r\leq R}\left(|\phi(u,r,\omega)-\underline \phi(\infty) |+|\partial_r\phi(u,r,\omega) |+|\partial^2_r\phi(u,r,\omega) | \right)\leq C_R e^{-2\sqrt {\Lambda/3}\,u}\;.
\end{equation}
Concerning the metric we have 
\begin{equation}
\label{metric-convergenceR}
\sup_{r\leq R}{|f(u,r) - 1 |} \leq C_R  e^{-  4\sqrt {\Lambda/3}\,u}\;, 
\end{equation}
and
\begin{equation}
\label{metric-convergenceR2}
\sup_{r\leq R}{|\tilde{ f}(u,r) - (1-\frac{\Lambda}{3}r^2) |} \leq C_R  e^{-  4\sqrt {\Lambda/3}\,u}\;, 
\end{equation}
where $f$ and $\tilde{f}$ determine, according to~\eqref{metricBondi}, the spacetime metric components of the solution in the $(u,r,\omega)$ Bondi coordinates  (recall that $f^{\mathrm {dS}}= 1$ and $\tilde{f}^{\mathrm {dS}}=1-\frac{\Lambda}{3}r^2$). 
\item Let $R\gg 1$. Fix Bondi time by imposing $d\hat u=f(u,r=\infty) du$.  
Let $(e_{I})_{I=0,1,2,3}$ be an orthonormal frame in the $\{r>R\}$ region of de Sitter spacetime. There exists a diffeomorphism, mapping this region of de Sitter to the region $\{r>R\}$ in our spacetime such that, by writing $g_{IJ}(\hat u,r)=g(e_I,e_J)_{(\hat u, r, \omega)}$, we have
\begin{equation}
\label{metric-convergence3}
|g_{IJ}(\hat u,r)-g^{\mathrm {dS}}_{IJ}(\hat u,r)|\lesssim \frac{1}{(1+r)^{2(1-\delta)}} e^{-  2(1-\varepsilon)(1+\delta/2)\sqrt {\Lambda/3}\,\hat u}\;,
\end{equation}
where $\varepsilon>0$ is a constant that can be made arbitrarily small by decreasing $\varepsilon_0$. 
\item $(M,\metric)$ is causally geodesically complete towards the future.
\end{enumerate}
\end{thm}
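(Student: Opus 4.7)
The plan follows the strategy sketched in the introduction. First, I would establish local existence by a contraction argument applied to an integrated form of~\eqref{BondiSys}: fixing the gauge $f(u,0)\equiv 1$ and the regularity condition $\tilde f(u,0)\equiv 1$, the equations~\eqref{partialfoverf}--\eqref{drrftilde} become first-order radial ODEs, while~\eqref{new-last-eq} becomes a transport equation for $\partial_r(r\phi)$ along the integral curves of $L:=\partial_u - \tfrac{1}{2}\tilde f\,\partial_r$. This yields both a solution on $[0,u_*]\times[0,+\infty)$ for some small $u_*>0$, and a local-in-radius solution on some $[0,+\infty)\times[0,R]$ that supplies the boundary data $\phi(u,0)$ and $\partial_r(r\phi)(u,0)$ at the center of symmetry for later use.

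To address obstacle (iv) I would commute~\eqref{new-last-eq} with $\partial_r$ and work primarily with $\psi := \partial^2_r(r\phi)$, which satisfies a transport equation of the schematic form
\begin{equation*}
L\psi \;=\; a(u,r)\,\psi + b(u,r)\,\partial_r\phi + c(u,r)\,(\phi-\underline\phi),
\end{equation*}
with coefficients controlled by $f,\tilde f$ and their radial derivatives. This commutation, analogous to the one that simplifies the de Sitter wave equation in~\cite{CostaSpherically}, simultaneously handles obstacle (ii) by removing the non-decaying limit $\underline\phi(u)$ from the principal unknown. The required decay of $\psi$ is precisely $(1+r)^{-2+\delta}$, matching the smallness assumption~\eqref{smallness}, and control of one further radial derivative of $\psi$ accounts for the third-derivative regularity imposed on $\phi_0$.

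The core step is then a global-in-$u$ bootstrap on weighted $L^\infty$ norms consistent with~\eqref{phi-convergence}--\eqref{metric-convergenceR2}: assume, for some large constant $C$,
\begin{equation*}
(1+r)^{2-\delta}|\psi(u,r)| + (1+r)^{3-\delta}|\partial_r\psi(u,r)| \;\leq\; C\varepsilon_0\, e^{-(1+\delta/2)\sqrt{\Lambda/3}\,u},
\end{equation*}
together with analogous bounds on $\partial_r(r\phi)$ and $\phi-\underline\phi$. Radially integrating~\eqref{partialfoverf}--\eqref{drrftilde} from $r=0$ under these assumptions gives improved control of $f-1$ and $\tilde f - (1-\Lambda r^2/3)$ and confirms that the cosmological redshift carried by $L$ through the zero of $\tilde f$ is essentially the de Sitter one at rate $\sqrt{\Lambda/3}$. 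Integrating the commuted equation along the characteristics of $L$, and trading a portion of the radial weight $\delta$ for the excess $\delta/2$ in the exponential rate, should produce the crucial new a priori estimate that regains one full power of $r^{-1}$ and closes the bootstrap for $\psi$. Integrating $\psi$ radially back from $r=0$, using the boundary data supplied by the local solution, then closes the bootstrap for $\partial_r(r\phi)$ and for $\phi-\underline\phi$.

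Once the bootstrap is closed globally the decay statements~\eqref{phi-convergence}--\eqref{metric-convergenceR2} follow directly; the existence of $\underline\phi(\infty)$ is immediate from $\underline\phi$ being Cauchy in $u$ by the exponential decay of its derivative; the asymptotic comparison~\eqref{metric-convergence3} is obtained after the rescaling $d\hat u=f(u,\infty)\,du$ and term-by-term comparison with~\eqref{dSBondi}; and future causal geodesic completeness is proved by integrating the geodesic equations in Bondi coordinates, the near-de Sitter bounds on $f$ and $\tilde f$ providing the necessary lower bound on affine parameter growth. The main obstacle, I expect, will be closing the bootstrap for $\psi$: the zeroth-order coefficient in the commuted equation inherits the linear-in-$r$ growth of $\partial_r\tilde f$ noted in (i), and the existence of characteristics reaching infinity in arbitrarily small Bondi time (iii) forbids even a logarithmic loss of radial weight along $L$. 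The new characteristic estimate must therefore deliver the missing power of $r^{-1}$ exactly, and aligning its constants with those of the bootstrap is the delicate technical point of the whole scheme.
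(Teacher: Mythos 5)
Your proposal follows essentially the same route as the paper: pass to the Christodoulou variable $h=\partial_r(r\phi)$, commute with $\partial_r$ and solve for $\partial_r h$ using boundary data at $r=0$ supplied by a local (in radius and time) solution, prove a characteristic integral estimate that regains one power of radial decay, and close a weighted $L^\infty$ bootstrap whose exponential rate comes from trading radial weight for temporal decay. The only cosmetic difference is that the paper's continuity argument assumes only uniform smallness of the weighted norm and extracts the $e^{-(1+\delta/2)\sqrt{\Lambda/3}\,u}$ decay a posteriori via Gr\"onwall, rather than building the decay into the bootstrap ansatz as you do.
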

\begin{Remark}
\label{remReg}
The wave equation \eqref{conservation-energy} implies that $\phi$ is regular at the centre of symmetry. For instance,  to obtain a $C^1$ scalar field at the center we need the vanishing of the derivatives along the orthogonal space to $r=0$. This, in turn, is equivalent to $\partial_u\phi(u,0)=\tilde f\partial_r\phi(u,0)$ which is  an immediate consequence of the wave equation.
This regularity ``for free'' is a remarkable advantage of using Bondi coordinates.
\end{Remark}
\begin{Remark} 
\label{rmkNonConst}
Note that the function $\underline{\phi}$ clearly does not have to be a constant. As an example consider initial data satisfying $\phi_0(r)=1$, for $r\leq R$, with $R\gg1$, and $\phi_0(r)\rightarrow 0$, $r\rightarrow \infty$. Then $\phi\equiv 1$ in the future domain of dependence of $\{u=0,r\leq R\}$ and, consequently, there exists $\underline u>0$ such that $\underline{\phi}(u)=1$, for all $u\geq \underline u$, and $\underline{\phi}(0)=0$. Moreover, by continuity, $\underline{\phi}$ takes all the values in the interval $[0,1]$.
\end{Remark}
\begin{Remark} 
We note that the constructed spacetimes are {\em weakly asymptotically simple}, they can be conformally compactified with conformal factor $\Omega=1/r$ and the resulting future infinities $\mathscr{I}^+$ are spacelike. Furthermore, for each spacetime, there exists a cosmological   horizon (and a cosmological apparent horizon) relative to $r=0$ which is complete to the future~\cite{CostaProblem}[Section 3].
\end{Remark}

\section{Einstein-$\Lambda$-scalar field system in Christodoulou's variables}

From now on, to simplify the notation, we fix $\Lambda=3$. This possibility is allowed since the Einstein equations can be rescaled with $\Lambda$.

For Bondi-spherically symmetric spacetimes, if one introduces the quantity
\begin{equation}
\label{def-h}
h={\partial}_r(r\phi),
\end{equation}
the full content of the Einstein-$\Lambda$-scalar field equations~\eqref{EFE-two}--\eqref{conservation-energy} is encoded in the integro-differential scalar equation (see also \cite{Christodoulou:1986})
\begin{equation}
\label{mainEq}
Dh=G\left(h-\bar{h}\right)\;,
\end{equation}
derived from~\eqref{new-last-eq},
where the differential operator (``incoming'' null vector field) is given by
\begin{equation}
\label{D}
 D=\partial_ u-\frac{1}{2}\tilde{f}\,\partial_r\;,
\end{equation}
while
\begin{equation}
\label{G}
 G:=\frac{1}{2}\partial_r\tilde f=\frac{1}{2r}\left[(f-\tilde{f})-3 f r^{2}\right]\;,
\end{equation}
and the radial average of a function is given by
\begin{equation}
\label{scalar}
{\bar h}(u,r):=\frac{1}{r}\int_0^rh(u,s)ds\;.
\end{equation}
It is useful to note that the scalar field $\phi$ is recovered from $h$ by averaging as
\begin{equation}
\label{scalar}
\phi={\bar h}\;
\end{equation}
and that
\begin{equation}\label{drbar}
\partial_r\bar h=\frac{h-\bar h}{r}\;.
\end{equation}
By setting the condition
\begin{equation}
\label{boundary-cond}
f(u,r=0)=1,
\end{equation}
which can always be done by an appropriate rescaling
of the $u$ coordinate,
the metric coefficients are obtained from $h$ by the relations:
\begin{equation}
\label{metricQuoficients}
f(u,r)=\exp\left({\frac{1}{2}\int_{0}^{r}\frac{\left(h(u,s)-\bar{h}(u,s)\right)^{2}}{s}ds}\right) \quad,\quad\tilde{f}(u,r)=\frac{1}{r}\int^{r}_{0}(1-3\, s^2)f(u,s)ds\;.
\end{equation}
We will also need an evolution equation for $\partial_rh$  which, by differentiating~\eqref{mainEq} given a sufficiently regular solution, turns out to be
\begin{equation}
\label{D_partial_h}
 (D-2G)\partial_{r}h=-J\,\partial_{r}\bar{h}\;,
\end{equation}
for
\begin{equation}
\label{defJ}
J:=G-r\partial_rG= 3G+3 r f -\frac{1-3 r^{2}}{2}\partial_r f\;.
\end{equation}
\begin{Remark}
	\label{remark-final-sol}
	We remark the fact that if $\phi$ and $\metric$, given by \eqref{metricBondi}, satisfy \eqref{EFE-two} and \eqref{conservation-energy} (or equivalently \eqref{partialfoverf}-\eqref{new-last-eq}), then $h$ defined by \eqref{def-h} solves \eqref{mainEq}. Conversely, if $h$ is a sufficiently regular solution to \eqref{mainEq}, then $\phi=\bar h$, $f$ and $\tilde f$ given by \eqref{metricQuoficients}, solve \eqref{EFE-two}-\eqref{conservation-energy}.
\end{Remark}
\section{Characteristics and a priori estimates}

\newcommand{\Np}{L^{\infty,p}_r}

\newcommand{\Linfty}{L^{\infty}}

\newcommand{\Niid}{L^{\infty,2-\delta}}

\newcommand{\Nid}{L^{\infty,1-\delta}}

\newcommand{\Niiid}{L^{\infty,3-\delta}}

\newcommand{\Nivd}{L^{\infty,3-\delta}}

\newcommand{\yup}{L^{\infty}_UL^{\infty,p}_r}

\newcommand{\yuo}{L_U^{\infty}L^{\infty}_r}

\newcommand{\yupI}{Y_U^{1-\delta}}

\newcommand{\yupII}{L_U^{\infty}L^{\infty,2-\delta}_r}

\newcommand{\yupIII}{L_U^{\infty}L^{\infty,3-\delta}_r}

\newcommand{\yupIV}{L_U^{\infty}L^{\infty,4-\delta}_r}

\newcommand{\xup}{X_U^p}

\newcommand{\xupII}{X_U^{2-\delta}}

\newcommand{\Ndrh}{\|\partial_r h(u,\,\cdot\,)\|_{\Niid}}

\newcommand{\Ydrh}{\|\partial_r h\|_{\yupII}}

\newcommand{\Cd}{C_{\delta}}
%

\subsection{Norms and basic estimates}

Consider $U\in\; ]0,+\infty]$, $p\in \mathbb{R}$ and $f:[0,U]\times[0,+\infty[\rightarrow\mathbb{R}$ a continuous function. We define
$$\|f(u,\,\cdot\,)\|_{\Np}:=\sup_{r\geq0}\left|(1+r)^p f(u,r)\right|\;,$$
and for $p=0$ we will simply write $\Linfty_r=L^{\infty,0}_r$. We will also consider the norms
$$\|f\|_{\yup}:=\sup_{0\leq u\leq U}\|f(u,\,\cdot\,)\|_{\Np}\;.$$
Now, since
\begin{equation}
\label{hhbar1}
\|\bar h(u,\,\cdot\,)\|_{\Linfty_r}\leq \|h(u,\,\cdot\,)\|_{\Linfty_r}
\end{equation}
then
\begin{equation}
\label{hhbar2}
\|\bar h\|_{\yuo}\leq \|h\|_{\yuo}\;.
\end{equation}
Note also that, for any appropriately regular function $w$, we have
\begin{eqnarray*}
|w(u,r)-\bar w(u,r)|
&=&
\left|\frac{1}{r}\int_0^r w(u,r)-w(u,s) ds\right|
\\
&=&
\left|\frac{1}{r}\int_0^r\int_s^r \partial_r w(u,\rho)d\rho \, ds\right|
\\
&\leq&
\frac{1}{r}\int_0^r\left(\int_s^r \frac{1}{(1+\rho)^{p}}\|\partial_r w(u,\,\cdot\,)\|_{L^{\infty,p}_r}\,d\rho\right) ds\;.
\end{eqnarray*}
Then, for $p=0$ we get
\begin{equation}
\label{hhbar0}
|w(u,r)-\bar w(u,r)|\leq \frac{r}{2} \|\partial_r w(u,\,\cdot\,)\|_{\Linfty_r} \;,
\end{equation}
while for $p=2-\delta$, with $0<\delta<1$, this leads to
$$|w(u,r)-\bar w(u,r)|\leq C_{\delta}\|\partial_r w(u,\,\cdot\,)\|_{\Niid_r} \frac{r}{(1+r)^{2-\delta}} W(r)\;,$$
where $C_{\delta}>0$ is a constant depending on $\delta$ and 
$$W(r)=\frac{(1+r)\left[1+(1-\delta)r-(1+r)^{1-\delta}\right]}{r^2}\;.$$
Since $W$ has a finite limit as $r\rightarrow \infty$, we can combine the last inequality with~\eqref{hhbar0} to obtain
\begin{equation}
\label{hhbar}
|w(u,r)-\bar w(u,r)|\leq C_{\delta}\frac{r}{(1+r)^{2-\delta}}\|\partial_r w(u,\,\cdot\,)\|_{\Niid_r}\;.
\end{equation}
A similar reasoning for $p>2$ provides the estimate
\begin{equation}
\label{hhbarp}
|w(u,r)-\bar w(u,r)|\leq C_{p}\frac{r}{(1+r)^{2}}\|\partial_r w(u,\,\cdot\,)\|_{\Np}\;,
\end{equation}
while for $p=2$ we get
\begin{equation}
\label{hhbarp2}
|w(u,r)-\bar w(u,r)|\leq C_{2}\frac{r\log{(2+r)}}{(1+r)^2}\|\partial_r w(u,\,\cdot\,)\|_{L_{r}^{\infty,2}}\;.
\end{equation} 
It is of interest to note that, in the previous estimates, the lost of radial decay  is the smallest in the case $p=2-\delta$.

Using~\eqref{hhbar} we conclude that
\begin{eqnarray}
\int_0^r\frac{\left(h(u,s)-\bar h(u,s)\right)^2}{s}\,ds
\leq
\label{ineq-for-lem}
C_{\delta}\|\partial_r h(u,\,\cdot\,)\|_{\Niid_r}^2\;,
\end{eqnarray}
which, using  \eqref{metricQuoficients}, immediately leads to the following estimate that measures the deviation from the de Sitter solution:
\begin{equation*}
1\leq f(u,r)\leq \exp\left(\Cd\|\partial_r h\|_{\yupII}^2\right)\;\; ,\; \forall (u,r)\in[0,U]\times [0,\infty[\;.
\end{equation*}
We can rewrite the previous inequality as
\begin{equation}
\label{fEstimate}
1\leq f(u,r)\leq 1+\varepsilon\;\; ,\; \forall (u,r)\in[0,U]\times [0,\infty[\;,
\end{equation}
where, from now on, $\varepsilon>0$ will represent a quantity that can be made arbitrarily small by decreasing
$\|\partial_r h\|_{\yupII}^2$.

As a consequence, arguing as in Section 4.1 of~\cite{CostaProblem}, we obtain
\begin{equation}
\label{ftildeabove}
1-(1+\varepsilon)r^2\le \tilde f(u,r)\leq 1+\varepsilon-r^2 \;,\;\; \forall (u,r)\in[0,U]\times [0,\infty[\;,
\end{equation}
 and these inequalities can then be used to estimate the characteristics of equation~\eqref{mainEq}, which are the integral curves of the operator $D$. The characteristic through $(u_1,r_1)$  will be denoted by
\begin{equation}
\chi(u)=\chi(u;u_1,r_1)=(u,r(u;u_1,r_1))\;
\end{equation}
and its radial component is the unique solution of the ordinary differential equation
\begin{equation}
 \frac{dr}{du}=-\frac{1}{2}\tilde{f}(u,r)\;,
\label{Characteristic_ODE}
\end{equation}
satisfying
$r(u_1)=r_1\;.$
Explicit estimates for these characteristics, valid for appropriately small $\|\partial_r h\|_{\yupII}^2$, can be found in~\cite[(30)-(33)]{CostaProblem}.

Now, since $\partial_r f\geq0$ then
$$\bar f(u,r)=\frac{1}{r}\int_0^r f(u,s)ds\leq f(u,r)\;.$$
Using the definition \eqref{metricQuoficients}, the inequality \eqref{hhbar}, together with $f\leq 2$, for small enough $\|\partial_r h\|_{\yupII}^2$, gives
\begin{equation}
\label{drfEst}
0\leq \partial_r f = \frac{1}{2} f \frac{\left(h-\bar h\right)^2}{r}\leq C_{\delta} \frac{r}{(1+r)^{4-2\delta}}\|\partial_r h\|_{\yupII}^2\;,
\end{equation}
which, in view of~\eqref{hhbarp}, leads to
\begin{equation}
\label{fbarfrough}
0\leq f-\bar f \leq C_{\delta}\frac{r}{(1+r)^{2}}\|\partial_r h\|_{\yupII}^2\;,
\end{equation}
provided $0<\delta<1/2$.

We are now able to obtain appropriate estimates for important coefficients of our evolution equations:

\begin{lem}
	\label{lem1}
	Let $0<\delta<1/2$ and $U\in\; ]0,+\infty]$. There exists $x=x(\delta)>0$ such that, if $\|\partial_r h\|_{\yupII}<x$, then the following estimates hold, in $[0,U]\times[0,\infty[$,
	\begin{equation}
	\label{Gestimate}
	G(u,r)\leq-(1-\varepsilon)\,r\;,
	\end{equation}
	\begin{equation}
	\label{absGestimate}
	|G(u,r)|\leq(1+\varepsilon)\,r\;,
	\end{equation}
	and
	\begin{equation}
	\label{Jestimate}
	|J(u,r)|\leq C_{\delta}\frac{1}{(1+r)^{1-2\delta}}\|\partial_r h\|_{\yupII}^2\;,
	\end{equation}
	where $\varepsilon>0$ represents a quantity that vanishes when $x\rightarrow 0$.
	\end{lem}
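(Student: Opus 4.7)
My plan is to expand both $G$ and $J$ around their de Sitter values and control the corrections via the previously established bounds \eqref{fEstimate}, \eqref{ftildeabove}, \eqref{drfEst}, \eqref{fbarfrough} and \eqref{hhbar}. The first step is a purely algebraic decomposition: using \eqref{metricQuoficients} to rewrite $\tilde f = \bar f - \frac{3}{r}\int_0^r s^2 f(u,s)\,ds$ and substituting into $2rG=(f-\tilde f)-3fr^2$ yields
\begin{equation*}
G \;=\; -f\,r \;+\; \frac{f-\bar f}{2r} \;+\; \frac{3}{2r^2}\int_0^r s^2\bigl[f(u,s)-f(u,r)\bigr]\,ds,
\end{equation*}
in which the leading de Sitter contribution $-fr$ is isolated and the remaining two terms are manifestly quadratic in $\|\partial_r h\|_{\yupII}$.

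For \eqref{Gestimate}--\eqref{absGestimate}, I would use \eqref{fEstimate} to place $-fr\in [-(1+\varepsilon)r,-r]$, and then show both corrections are $O(\varepsilon r)$. The second correction is non-positive by $\partial_r f\geq 0$, and after swapping the order of integration equals $-\frac{1}{2r^2}\int_0^r\rho^3\,\partial_r f(u,\rho)\,d\rho$; inserting \eqref{drfEst} gives a bound of $O(\varepsilon\min(r^3,r^{2\delta-1}))\subset O(\varepsilon r)$, and this is where the hypothesis $\delta<1/2$ enters, to ensure that $\rho^4(1+\rho)^{-(4-2\delta)}$ is integrable at infinity. For the first correction, I would split into $r\gtrsim 1$ (where \eqref{fbarfrough} gives $|f-\bar f|/(2r)\lesssim \varepsilon(1+r)^{-2}\leq \varepsilon r$) and $r\lesssim 1$ (where the sharper identity $f-\bar f=\frac{1}{r}\int_0^r \rho\,\partial_r f(u,\rho)\,d\rho$ together with the small-$\rho$ behaviour of \eqref{drfEst} yields $|f-\bar f|\lesssim \varepsilon r^2$, hence $|f-\bar f|/(2r)\lesssim \varepsilon r$).

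For \eqref{Jestimate}, I would rewrite $J=3(G+rf)-\tfrac{1-3r^2}{2}\partial_r f$, so that $G+rf$ is precisely the sum of the two correction terms from the decomposition above, and $J$ consists entirely of pieces of quadratic order in $\|\partial_r h\|_{\yupII}$. The decay rate in \eqref{Jestimate} is dictated by the second correction, whose large-$r$ behaviour $r^{-2}\int_0^r \rho^{4}(1+\rho)^{-(4-2\delta)}\,d\rho\sim r^{2\delta-1}$ matches the claimed weight $(1+r)^{-(1-2\delta)}$; the term $\tfrac{1-3r^2}{2}\partial_r f$ decays at the same rate at infinity via \eqref{drfEst}, while the first correction decays faster, like $(1+r)^{-2}$. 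The main (mild) obstacle is bookkeeping: one has to keep the small-$r$ and large-$r$ estimates consistent in the weighted norms, but the small-$r$ regime is automatic because each correction is $O(\varepsilon)$ there and $(1+r)^{-(1-2\delta)}\sim 1$. The essential structural observation is that the de Sitter cancellation isolating $-fr$ in $G$ is \emph{exact}, so all error terms are genuinely of quadratic order in $\|\partial_r h\|_{\yupII}$, which is what allows them to be absorbed into the small parameter $\varepsilon$ and into the weight $(1+r)^{-(1-2\delta)}$, respectively.
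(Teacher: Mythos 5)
Your argument is correct and essentially reproduces the paper's proof: the identity $J=3\left(\tfrac{f-\bar f}{2r}+\tfrac{3}{2r^2}\int_0^r s^2\bigl(f(u,s)-f(u,r)\bigr)ds\right)-\tfrac{1-3r^2}{2}\partial_r f$ is exactly the decomposition used there, with the three pieces controlled by the same ingredients \eqref{fbarfrough} and \eqref{drfEst} (the paper bounds the middle term by the mean value theorem plus a crude $O(1)$ bound rather than your Fubini computation, and for \eqref{Gestimate}--\eqref{absGestimate} it simply cites Lemma 1 of the earlier reference, but these differences are cosmetic). The only slip is your parenthetical claim that $\delta<1/2$ ensures $\rho^4(1+\rho)^{-(4-2\delta)}$ is integrable at infinity --- that integrand behaves like $\rho^{2\delta}$ and is never integrable there; what you actually need, and implicitly use, is that $r^{-2}\int_0^r\rho^{2\delta}\,d\rho\sim r^{2\delta-1}$ matches the weight $(1+r)^{-(1-2\delta)}$, with $\delta<1/2$ entering instead through the validity of \eqref{fbarfrough} and through making that weight decay.
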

\begin{proof}
The proof of estimates~\eqref{Gestimate} and~\eqref{absGestimate} is analogous to the proof of Lemma 1 of~\cite{CostaProblem}.
To establish the final estimate note that, since $f$ is increasing in $r$,
\begin{eqnarray*}
\left|\frac{3}{r^2} \int_0^r s^2 f(u,s)ds -r f(u,r)\right| =  \left |\frac{3}{r^2} \int_0^r s^2(f(u,s)-f(u,r)) ds \right| 
 \leq  C ,
\end{eqnarray*}
where $C>0$ is constant and the last inequality follows from the boundedness of $f$.

But we also have
\begin{eqnarray*}
f(u,r)-f(u,s) &=&  \partial_r f(u,\rho)  (r-s) \;\;, \text{ with }  s<\rho<r\;, \\
 &\lesssim  &
 \frac{\|\partial_r h\|_{\yupII}^2}{(1+\rho)^{3-2\delta}} r \\
 &\leq  &
 \frac{\|\partial_r h\|_{\yupII}^2}{(1+s)^{3-2\delta}} r \;,
\end{eqnarray*}
where $A\lesssim B$ means that $A\leq C_{\delta} B$, which gives
\begin{eqnarray*}
\left|\frac{3}{r^2} \int_0^r s^2 f(u,s)ds -r f(u,r)\right| \lesssim \frac{(1+r)^{2\delta}}{r} \|\partial_r h\|_{\yupII}^2\;.
\end{eqnarray*}
Since we already saw that this quantity is also bounded, we conclude that
\begin{eqnarray}
\label{JpartEst}
\left|\frac{3}{r^2} \int_0^r s^2 f(u,s)ds -r f(u,r)\right| \lesssim \frac{1}{(1+r)^{1-2\delta}} \|\partial_r h\|_{\yupII}^2\;.
\end{eqnarray}
Now, using~\eqref{defJ}, ~\eqref{G} and~\eqref{metricQuoficients} we can write
$$|J|\leq\frac{3}{2}\left|\frac{3}{r^2} \int_0^r s^2 f(u,s)ds -r f(u,r)\right|+ 3 \frac{|f-\bar f|}{2r}+\frac{|1-3r^2|}{2}\partial_r f$$
and~\eqref{Jestimate} follows from~\eqref{JpartEst},~\eqref{fbarfrough} and~\eqref{drfEst}.

\end{proof}
\subsection{A priori estimates along characteristics}
\label{sectionAPriori}

We now establish estimates for relevant quantities integrated along characteristics. In particular we will obtain one of the main ingredients missing in~\cite{CostaProblem} which will allow us, later on, to recover one power of decay in the radial direction. In turn, this will be essential to close our estimates in the entire radial range $r\geq 0$.

%
\begin{lem}
\label{main}
Let  $m>0$ and $0<\delta<1/2$.
Given $\varepsilon>0$, there exists $x,C>0$ such that, if $\|\partial_r h\|_{\yupII}<x$, then, for $u\leq u_1\leq U$,
\begin{equation}
\label{expG}
e^{m\int_u^{u_1}G(s,r(s;u_1,r_1))ds}\leq C \left(\frac{1+r(u;u_1,r_1)}{1+r_1}\right)^{2m-\varepsilon}\;.
\end{equation}
%
Moreover, for $p\in\mathbb{R}$ such that $2m>p+1$, there exists $x,C>0$ such that, if $\|\partial_r h\|_{\yupII}<x$, then, for $u_1\leq U$,
\begin{equation}
\label{main-estimate}
\int_0^{u_1} \frac{1}{\left(1+r(u;u_1,r_1)\right)^p}\,e^{m\int_u^{u_1}G(s,r(s;u_1,r_1))ds}du\leq
\frac{C(1+u_1)}{(1+r_1)^{p+1}}\;.
\end{equation}
\end{lem}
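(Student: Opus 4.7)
The plan is to establish \eqref{expG} first and then exploit it, combined with a change of variables $u \mapsto \rho = r(u;u_1,r_1)$ along the characteristic, to prove \eqref{main-estimate}. Throughout, $\varepsilon_1>0$ will denote the small quantity appearing in Lemma \ref{lem1} and in \eqref{ftildeabove}, which can be made arbitrarily small by taking $\|\partial_r h\|_{\yupII}$ sufficiently small.

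For \eqref{expG}, the key observation is that along a characteristic $(s,r(s))$, one has
\[
\frac{d}{ds}\ln(1+r) \;=\; \frac{-\tilde f(s,r(s))}{2(1+r(s))} \;\leq\; \frac{r(s)-1}{2}+\frac{\varepsilon_1\,r(s)}{2},
\]
where the inequality uses the lower bound for $\tilde f$ in \eqref{ftildeabove}. Integrating from $u$ to $u_1$ and rearranging gives
\[
\int_u^{u_1} r(s)\,ds \;\geq\; \frac{2}{1+\varepsilon_1}\Bigl[\ln\tfrac{1+r_1}{1+r(u)} + \tfrac{u_1-u}{2}\Bigr].
\]
Combined with $G\leq -(1-\varepsilon_1)\,r$ from \eqref{Gestimate}, this yields, whenever the bracket above is nonnegative (which always holds when $r(u)\leq r_1$),
\[
e^{m\int_u^{u_1} G\,ds} \;\leq\; e^{-m\alpha(u_1-u)}\left(\frac{1+r(u)}{1+r_1}\right)^{2m\alpha}, \qquad \alpha := \frac{1-\varepsilon_1}{1+\varepsilon_1}.
\]
Choosing $\varepsilon_1$ small enough that $2m\alpha \geq 2m-\varepsilon$, and noting that the base is then $\leq 1$, the factor $e^{-m\alpha(u_1-u)}\leq 1$ is harmless and \eqref{expG} follows in this regime. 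In the complementary regime one has $(1+r(u))/(1+r_1)>1$, so that the right-hand side of \eqref{expG} already exceeds $1$ and the estimate collapses to the trivial bound $e^{m\int G}\leq 1$ (as $G\leq 0$).

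For \eqref{main-estimate}, choose $\varepsilon>0$ in \eqref{expG} small enough that $q := 2m-\varepsilon-p >1$, which is possible thanks to $2m>p+1$. By \eqref{expG} the left-hand side is bounded by $C(1+r_1)^{-(2m-\varepsilon)}\int_0^{u_1}(1+r(u))^{q}\,du$. The case $r_1\leq 1$ is immediate (as $1+r(u)$ is uniformly bounded along the characteristic), so focus on $r_1>1$, where the characteristic stays in $\{r>1\}$ thanks to the bounds on $\tilde f$. Splitting along $\{r(u)\leq 2\}$ and $\{r(u)>2\}$, the first piece contributes at most $3^{q}u_1$ since the integrand is bounded and the $u$-length is $\leq u_1$. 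On the second piece, change variables via $du=-2\,d\rho/\tilde f$ and use $|\tilde f|\geq \rho^2/2$ for $\rho\geq 2$ (which follows from $\tilde f\leq 1+\varepsilon_1-\rho^2$) to obtain
\[
\int_{\{r(u)>2\}}(1+r(u))^{q}\,du \;\lesssim\; \int_2^{r_1}(1+\rho)^{q-2}\,d\rho \;\lesssim\; (1+r_1)^{q-1}.
\]
Multiplying both contributions by $(1+r_1)^{-(2m-\varepsilon)}$, and using $q-1-(2m-\varepsilon)=-(p+1)$ together with $2m-\varepsilon\geq p+1$, yields the required $\lesssim (1+u_1)/(1+r_1)^{p+1}$.

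The main delicacy is the behavior of $\tilde f$ near the cosmological horizon $r=1$, where $\tilde f$ degenerates and a direct change of variables would produce a logarithmic singularity of $1/\tilde f$. The argument sidesteps this by handling the near-horizon portion via the crude $u$-length bound $\leq u_1$, which is precisely the source of the extra factor $(1+u_1)$ in \eqref{main-estimate}; the integrability condition $2m>p+1$ enters as the precise threshold at which the outer integral $\int_2^{r_1}(1+\rho)^{q-2}\,d\rho$ produces the optimal polynomial decay $(1+r_1)^{-(p+1)}$.
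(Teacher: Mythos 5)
Your argument is correct, but it reaches the key estimate \eqref{expG} by a genuinely different and more self-contained route than the paper. The paper splits each characteristic at the radius $R$ where \eqref{ftildeLargeR} becomes available: in the outer region it converts $\int G$ into $\log(r_1/r(u))$ via $1=\frac{2}{-\tilde f}\frac{dr}{du}$ (giving \eqref{eG0}), while in the inner region it invokes the explicit hyperbolic-function bounds \eqref{estCharExplicit}--\eqref{hypToRII} imported from \cite{CostaProblem} to compare $\int r$ with a ratio of $\sinh$'s. Your single differential inequality $\frac{d}{ds}\ln(1+r)\le\frac{(1+\varepsilon_1)r-1}{2}$, which keeps the $-1$ coming from $\tilde f\approx 1-r^2$ instead of discarding it, yields $e^{m\int G}\le e^{-m\alpha(u_1-u)}\bigl((1+r(u))/(1+r_1)\bigr)^{2m\alpha}$ uniformly in $r\ge0$ in one stroke, with no case analysis and no formulas imported from the earlier paper; the extra factor $e^{-m\alpha(u_1-u)}$ is a small bonus the paper does not record at this stage. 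Your deduction of \eqref{main-estimate} then mirrors the paper's structure (crude $u$-length bound near the horizon, which is the source of the $(1+u_1)$ factor in both proofs, and the change of variables $du=-2\,d\rho/\tilde f$ far away), except that you filter everything through \eqref{expG} rather than through \eqref{eG0} directly. Two harmless imprecisions: the chain of inequalities behind your exponential bound is valid without any sign restriction on the bracket, so that caveat is unnecessary; and for $r_1>1$ the characteristic is only guaranteed to stay above $r_c^-=1-\varepsilon<1$, not above $1$ --- but neither point is used in an essential way, since your split at $r=2$ only requires the trivial length bound on $\{r(u)\le2\}$ and monotonicity of $r(u)$ on $\{r(u)>2\}$, both of which do hold.
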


\begin{proof}
Let $R>0$. If $r_1\leq R$, we can choose $x$ appropriately small so that~\eqref{Gestimate} holds, and we immediately get the desired result since
$$\int_{0}^{u_1} \frac{1}{\left(1+r(u)\right)^p}\,e^{m\int_u^{u_1}G(s,r(s))ds}du\leq u_1 \leq \frac{(1+R)^{p+1}}{(1+r_1)^{p+1}} (1+u_1)\;.$$
Now consider the case $r_1>R$  and let $r(u)=r(u;u_1,r_1)$.	
We start by observing that for $x$ sufficiently small and $R$ sufficiently large, according to~\eqref{ftildeabove}, we have
\begin{equation}
\label{ftildeLargeR}
 (1-\varepsilon_{x,R}) r^2\leq-\tilde f(u,r)\leq (1+\varepsilon_{x,R}) r^2 \;, \text{ for all } r>R\;,
\end{equation}
where, from now on,  $\varepsilon_{x,R}>0$ represents a quantity that vanishes when both $x\to 0$ and $R\to\infty$. Note that this last estimate also shows that all characteristics that start with $r_1$ sufficiently large are radially increasing.

Then, if we define
$$u_R=\max\left(\{u\in[0,u_1] :r(u)=R\}\cup \{0\}\right)\;,$$
we see that
$$ \frac{2(1-\varepsilon_{x,R})}{\left(r(u)\right)^2}\frac{dr}{du}\leq 1=\frac{2}{-\tilde f(u,r(u))}\frac{dr}{du}\leq \frac{2(1+\varepsilon_{x,R})}{\left(r(u)\right)^2}\frac{dr}{du}\;,\text{ for all }u\in[u_R,u_1]\;.$$
The previous inequalities together with~\eqref{Gestimate} allow us to estimate, for $u_R\leq u \leq u_1$,
\begin{eqnarray}
\nonumber
\int_u^{u_1}G(s,r(s))ds
&\leq&
-(1-\varepsilon)\int_{u}^{u_1}{r(s)}ds
\\
\nonumber
&\leq&
-2(1-\varepsilon_{x,R})\int_{u}^{u_1} \frac{1}{r(s)}\frac{dr}{ds}ds
\\
&=&
\nonumber
-2(1-\varepsilon_{x,R})\log \left(\frac{r_1}{r(u)}\right)\;,
\end{eqnarray}
from which we conclude that
\begin{equation}
\label{eG0}
e^{m\int_u^{u_1}G(s,r(s))ds}\le \left(\frac{r(u)}{r_1}\right)^{2(1-\varepsilon_{x,R})m}\;.
\end{equation}
By hypothesis $2m>p+1$ and we can increase $R$ and decrease $x$, if necessary, to make sure that $2(1-\varepsilon_{x,R})m-p-1>0$. Then, using the basic fact that
$r_1>(1-\varepsilon_{x,R})(1+r_1)\;,$
we obtain
\begin{eqnarray}
\nonumber
\int_{u_R}^{u_1} \frac{1}{\left(1+r(u)\right)^p}\,e^{m\int_u^{u_1}G(s,r(s))ds}du
&\leq&
\int_{u_R}^{u_1} \frac{2(1+\varepsilon_{x,R})}{\left(1+r(u)\right)^p} \left(\frac{r(u)}{r_1}\right)^{2(1-\varepsilon_{x,R})m}\frac{1}{(r(u))^2}\frac{dr}{du}du
\\
\nonumber
&\leq&
\frac{2(1+\varepsilon_{x,R})}{(1+r_1)^{2(1-\varepsilon_{x,R})m}}\int_{r(u_R)}^{r_1} (1+r)^{2(1-\varepsilon_{x,R})m-p-2} dr
\\
\label{mainEstimate1}
&\leq&\frac{2(1+\varepsilon_{x,R})}{2(1-\varepsilon_{x,R})m-(p+1)} \frac{1}{(1+r_1)^{p+1}}\;.
\end{eqnarray}
To estimate the contribution to~\eqref{main-estimate} in the interval $[0,u_R]$ we use~\cite[(32)]{CostaProblem} as
\begin{equation}
\label{estCharExplicit}
r(u)\geq (1-\varepsilon)\coth{\left\{\frac{1+\varepsilon}{2}(c^{-}-u)\right\}}\;,\forall u\leq u_1\;,
\end{equation}
which is valid for $r_1>R$, with $R$ chosen sufficiently large, and where $c^-$ is an integration constant that can be derived by recalling that, in the previous estimate, equality is attained at $u=u_1$.
Then
\begin{eqnarray*}
\int_u^{u_1} r(v)dv
&\geq&
-2\int_u^{u_1}\left(-\frac{1+\varepsilon}{2}\,\frac{\cosh{\left[\frac{1+\varepsilon}{2}(c^{-}-v)\right]}}{\sinh{\left[\frac{1+\varepsilon}{2}(c^{-}-v)\right]}}\right)dv
\\
&=&
-2\log \left(\frac{\sinh{\left[\frac{1+\varepsilon}{2}(c^{-}-u_1)\right]}}{\sinh{\left[\frac{1+\varepsilon}{2}(c^{-}-u)\right]}}\right)\;,
\end{eqnarray*}
and, from \eqref{Gestimate}
\begin{eqnarray}
\nonumber
e^{m\int_u^{u_1}G(v,r(v))dv}
&\leq&
e^{-(1-\varepsilon)m\int_u^{u_1}r(v)dv}
\\
\label{intExp}
&\leq&
\left(\frac{\sinh{\left[\frac{1+\varepsilon}{2}(c^{-}-u_1)\right]}}{\sinh{\left[\frac{1+\varepsilon}{2}(c^{-}-u)\right]}}\right)^{2m(1-\varepsilon)}
\;.
\end{eqnarray}
From the elementary identity
$\sinh x=1/\sqrt{\coth^2x-1}\;,$
using the equality in~\eqref{estCharExplicit}, we get
$$\sinh\left[\frac{1+\varepsilon}{2}(c^--u_1)\right]=\frac{1}{\sqrt{\coth^2[\frac{1+\varepsilon}{2}(c^--u_1)]-1}}
=\frac{1}{\sqrt{\frac{1}{(1-\varepsilon)^2}r_1^2-1}}\;,$$
as well as
$$\sinh\left[\frac{1+\varepsilon}{2}(c^--u)\right]
=\frac{1}{\sqrt{\coth^2[\frac{1+\varepsilon}{2}(c^--u)]-1}}\geq\frac{1}{\sqrt{\frac{1}{(1-\varepsilon)^2}\left(r(u)\right)^2-1}}\;.$$
Then, for $x$ and $R^{-1}$ sufficiently small, we find that
\begin{equation}
\label{hypToRII}
\frac{\sinh{\left[\frac{1+\varepsilon}{2}(c^{-}-u_1)\right]}}{\sinh{\left[\frac{1+\varepsilon}{2}(c^{-}-u)\right]}}\leq (1+\varepsilon_{x,R})\frac{1+r(u)}{1+r_1}\;.
\end{equation}
Since, according to~\eqref{Gestimate}, the exponential in the left hand side of~\eqref{expG} is bounded by one, using the last estimate together with~\eqref{intExp} and~\eqref{eG0}  we conclude that~\eqref{expG} holds.

Finally, recalling that by definition $r(u)\leq R$, for all $u\leq u_R$,
and since we can choose $2m(1-\varepsilon_{x,R})-p>1$, we get
\begin{eqnarray*}
\int_0^{u_R} \frac{1}{\left(1+r(u)\right)^p}\,e^{m\int_u^{u_1}G(s,r(s))ds}du
&\leq&
\frac{1+\varepsilon_{x,R}}{(1+r_1)^{2m(1-\varepsilon_{x,R})}}\int_0^{u_R} \left(1+r(u)\right)^{2m(1-\varepsilon_{x,R})-p} du
\\
&\leq&
2 \frac{\left(1+R\right)^{2m(1-\varepsilon_{x,R})-p}}{(1+r_1)^{p+1}} u_R
\\
 &\leq&
\frac{C_R u_1}{(1+r_1)^{p+1}} \;,
\end{eqnarray*}
which together with~\eqref{mainEstimate1} concludes the proof of~\eqref{main-estimate}.
\end{proof}
\section{Local existence in both radius and Bondi time}
%
In this section we state a basic local existence result, local both in radius and in Bondi time, that is an essential first step in the our construction of global solutions. Since the proof is fairly  standard we will leave it to Appendix~\ref{apProofLocal}.
\begin{prop}
\label{LocalTimeRadius}
Given $\tau_0>0$ and $h_0\in C^k([0,\tau_0])$, $k\in \mathbb{Z}^+$,
there exists a positive
$$\tau=\tau\left(\sup_{0\leq r\leq \tau_0}|h_0(r)|, \sup_{0\leq r\leq \tau_0}|h'_0(r)|\right)\leq \tau_0$$
and a unique solution, $h\in C^k([0,\tau]^2)$, to
\begin{equation}
\label{mainIVPLocalUR}
 \left\{
\begin{array}{l}
 Dh=G(h-\bar{h})\\
 h(0,r)=h_{0}(r)\;.
\end{array}
\right.
\end{equation}
\end{prop}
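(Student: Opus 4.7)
The plan is a standard Picard iteration applied to the integral form of~\eqref{mainIVPLocalUR} along characteristics. Integrating the transport equation $Dh=G(h-\bar h)$ along a characteristic $u\mapsto r(u;u_1,r_1)$ of $D$ gives
$$h(u_1,r_1)=h_0\bigl(r(0;u_1,r_1)\bigr)+\int_0^{u_1}G\,(h-\bar h)\bigl(u,r(u;u_1,r_1)\bigr)\,du,$$
where, given $h$, the coefficients $f,\tilde f, G$ are determined by~\eqref{metricQuoficients} and~\eqref{G}, and the characteristic is then obtained by solving~\eqref{Characteristic_ODE} with this $\tilde f$. This suggests the Picard map $\mathcal T$ on a suitable closed ball $B_M\subset C^0([0,\tau]^2)$: given $h^{(n)}\in B_M$, compute $\bar h^{(n)}, f^{(n)}, \tilde f^{(n)}, G^{(n)}$; solve~\eqref{Characteristic_ODE} for the flow $r^{(n)}(\,\cdot\,;u_1,r_1)$; and define $h^{(n+1)}(u_1,r_1)$ by the formula above with every quantity replaced by its $n$-th iterate.

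First I would show that, for $M$ slightly larger than $\sup_{[0,\tau_0]}|h_0|$ and $\tau$ sufficiently small, $\mathcal T$ maps $B_M$ into itself. The crude bound $|h-\bar h|\leq 2M$ yields $f\leq e^{CM^2 r}$ via~\eqref{metricQuoficients}, so $|\tilde f|\leq C_M$ and $|G|\leq C_M(1+r)$ on $[0,\tau]^2$. Consequently characteristics starting at $(u_1,r_1)\in[0,\tau]^2$ remain in $[0,\tau+C_M\tau]\subset[0,\tau_0]$, so $h_0\bigl(r^{(n)}(0;\,\cdot\,)\bigr)$ is defined; and the integral term in $\mathcal Th^{(n)}$ is bounded by $C_M M\tau$, so the image lies in $B_M$ after shrinking $\tau$.

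Next I would establish contraction of $\mathcal T$ in $C^0([0,\tau]^2)$ for $\tau$ small. Writing $\mathcal Th_1-\mathcal Th_2$, one splits into (a) the initial-data contribution $h_0(r_1^{(0)})-h_0(r_2^{(0)})$, controlled by $\sup|h_0'|\cdot|r_1^{(0)}-r_2^{(0)}|$; and (b) the integral, further split into differences of $G^{(i)}$, of $h_i-\bar h_i$, and of the two characteristics. The characteristic difference is controlled by $|r_1(u)-r_2(u)|\lesssim\tau\|\tilde f_1-\tilde f_2\|_\infty$ through Gronwall applied to~\eqref{Characteristic_ODE}, while $\|\tilde f_1-\tilde f_2\|_\infty\lesssim\|h_1-h_2\|_\infty$ follows from~\eqref{metricQuoficients} together with the uniform bound on $f_i$. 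Every contribution carries a prefactor $O(\tau)$, so $\mathcal T$ is a strict contraction once $\tau$ is small enough; the Banach fixed point theorem then produces a unique $h\in B_M$ solving the integral equation, and uniqueness among all $C^0$ solutions follows by applying the same contraction estimate to any pair of solutions.

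The main (modest) obstacle is the quasilinear structure: both the characteristics and the coefficient $G$ depend on the unknown $h$, and the nonlocal average $\bar h$ appearing in the source and inside the formula for $f$ requires careful bookkeeping of the Lipschitz dependencies. On the compact domain $[0,\tau]^2$ with merely continuous data these are routine Gronwall arguments. Finally, higher regularity is obtained by bootstrapping: differentiating the fixed-point equation in $r$ produces analogous integral equations for $\partial_r^j h$, $j\leq k$, whose right-hand sides involve derivatives of $h_0$ up to order $k$ together with lower-order derivatives of $h$; these inherit the same contraction structure on (possibly smaller) $[0,\tau']^2$, and $u$-derivatives are recovered from the equation itself. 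This yields $h\in C^k([0,\tau]^2)$ with $\tau$ depending only on $\sup|h_0|$ and $\sup|h_0'|$, as claimed.
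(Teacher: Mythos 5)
Your overall strategy --- Picard iteration on the integral form along characteristics, self-mapping of a ball for small $\tau$, contraction in the sup norm, higher regularity by bootstrapping --- is the same as the paper's (Appendix~\ref{apProofLocal}). The genuine gap is the choice of function space: a closed ball $B_M\subset C^0([0,\tau]^2)$ controlled only by $\sup|h|$ is not enough to run the argument, and the paper's proof propagates a uniform bound on $\partial_r h_n$ through the induction precisely because three steps break without it. First, the map $\mathcal T$ is not even well defined on all of $B_M$: by \eqref{metricQuoficients}, $f[h]$ requires $\int_0^r (h-\bar h)^2 s^{-1}\,ds<\infty$, and the crude bound $|h-\bar h|\le 2M$ gives only $4M^2\int_0^r s^{-1}ds=\infty$, so your claimed estimate $f\le e^{CM^2r}$ does not follow; it is the bound $|h-\bar h|\le \tfrac r2\|\partial_r h\|_{L^\infty}$ of \eqref{hhbar0} that makes $f$ finite and yields $|G|\lesssim r$. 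Second, in your contraction estimate the two iterates are evaluated along two \emph{different} characteristics, so after subtracting you must control terms of the form $\bigl[G^{(1)}(h_1-\bar h_1)\bigr](u,r^{(1)}(u))-\bigl[G^{(1)}(h_1-\bar h_1)\bigr](u,r^{(2)}(u))$; bounding these by $|r^{(1)}(u)-r^{(2)}(u)|$ requires a Lipschitz-in-$r$ bound on the integrand, i.e.\ again a uniform bound on $\partial_r h_n$. (The paper sidesteps the two-characteristics issue by integrating the equation for the difference $h_{n+1}-h_n$ along the single vector field $D_n$, but then the commutator term $\tfrac12(\tilde f_n-\tilde f_{n-1})\partial_r h_n$ appears and the derivative bound is needed anyway.) Third, a $C^0$ fixed point of the integral equation is not yet a $C^1$ solution of the PDE; the paper closes this by proving equicontinuity of $(\partial_r h_n)$ and invoking Arzel\`a--Ascoli, a step your bootstrap paragraph does not replace, since differentiating the fixed-point equation only helps once you already know the limit is differentiable (or once you set up the iteration jointly for $(h_n,\partial_r h_n)$ and prove convergence of both components, which you do not do).

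The repair is exactly the paper's: add to the induction hypothesis the bound $\|\partial_r h_n\|\le C_d$, derive the transported equation for $\partial_r h_{n+1}$ by commuting with $\partial_r$, and close that estimate alongside the one for $h_{n+1}$. This is also where the stated dependence of $\tau$ on $\sup_{0\leq r\leq\tau_0}|h_0'(r)|$ actually enters; in your write-up it is used only for the initial-data term $h_0(r_1^{(0)})-h_0(r_2^{(0)})$, which is not its real role.
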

%
\section{Global existence in radius and local in Bondi time}
The aim of this section is to upgrade the result in Proposition~\ref{LocalTimeRadius} to a global existence and uniqueness result in radius, but still local in (Bondi) time, at the cost of restricting to small data. More precisely, we will establish the following:
\begin{thm}
\label{localUGlobalR}
Given $0<\delta<1/2$ and $k\in\mathbb{Z}^+$, let $h_0\in C^{k+2}([0,+\infty[)\cap L^{\infty}([0,+\infty[)$ be such that $h'_0\in\Niid_r([0,+\infty[)$ and $h''_0\in\Niiid_r([0,+\infty[)$. Under such conditions, there exists $x_0'=x_0'(\delta)>0$ such that, if
\begin{equation}
\label{smallCond0}
\|h'_0\|_{\Niid_r} \leq x_0' \;,
\end{equation}
then there exists a positive $U=U(\delta,\|h'_0\|_{\Niid_r})$ and
a unique solution, $h\in C^k([0,U]\times[0,\infty[)$,  to
\begin{equation}
\label{mainIVPLocalU}
 \left\{
\begin{array}{l}
 Dh=G(h-\bar{h})\\
 h(0,r)=h_{0}(r)\;.
\end{array}
\right.
\end{equation}
Moreover, $\|\partial_rh\|_{\yupII}$ can be made arbitrarily small by decreasing $x_0'$.
\end{thm}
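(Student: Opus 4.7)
The plan is a continuity and bootstrap argument in the radial extent, combining Proposition~\ref{LocalTimeRadius} as the starting point with the a priori estimates of Lemmas~\ref{lem1} and~\ref{main}. For $U>0$ to be chosen small in terms of $\delta$ and $\|h'_0\|_{\Niid_r}$, I would define
$$R^* := \sup\left\{R \geq 0 : \text{there is a unique $C^k$ solution to \eqref{mainIVPLocalU} on $[0,U]\times[0,R]$ with $\|\partial_rh\|_{\yupII} \leq 2C_0 \|h'_0\|_{\Niid_r}$}\right\},$$
for an appropriate constant $C_0$. Proposition~\ref{LocalTimeRadius} ensures $R^*>0$, and the goal is to prove $R^*=+\infty$ via uniform a priori bounds.

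The central estimate is obtained by integrating equation~\eqref{D_partial_h}, $(D-2G)\partial_r h = -J\partial_r \bar{h}$, along characteristics of $D$. Tracing backward from $(u_1,r_1)$, the characteristic reaches $u=0$ at some $r(0)\geq 0$ (the axis is never hit, since $\tilde f$ keeps a definite sign on each characteristic by~\eqref{ftildeabove}), giving
\begin{equation*}
\partial_r h(u_1,r_1) = e^{2\int_0^{u_1}G\,ds}\, h'_0(r(0)) - \int_0^{u_1} e^{2\int_u^{u_1}G\,ds}\, (J\partial_r\bar{h})(u,r(u))\, du.
\end{equation*}
For the homogeneous term, Lemma~\ref{main} with $m=2$ combined with $|h'_0(r(0))|\leq (1+r(0))^{-(2-\delta)}\|h'_0\|_{\Niid_r}$ and the fact that $(1+r(0))\lesssim (1+r_1)$ along the backward characteristic (by~\eqref{ftildeabove}) yields a bound of order $\|h'_0\|_{\Niid_r}/(1+r_1)^{2-\delta}$. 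For the integral source, Lemma~\ref{lem1} gives $|J|\lesssim (1+r)^{-(1-2\delta)}\|\partial_r h\|_{\yupII}^2$, and~\eqref{hhbar} combined with~\eqref{drbar} yields $|\partial_r\bar{h}|\lesssim (1+r)^{-(2-\delta)}\|\partial_r h\|_{\yupII}$, whence $|J\partial_r\bar{h}|\lesssim (1+r)^{-(3-3\delta)}\|\partial_r h\|_{\yupII}^3$. Applying Lemma~\ref{main} with $m=2$ and $p=3-3\delta$ (so $2m-(p+1)=3\delta>0$) produces a bound of order $C_\delta(1+U)\|\partial_r h\|_{\yupII}^3/(1+r_1)^{2-\delta}$. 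Summing and taking suprema,
\begin{equation*}
\|\partial_r h\|_{\yupII} \leq C_0\|h'_0\|_{\Niid_r} + C_\delta(1+U)\|\partial_r h\|_{\yupII}^3,
\end{equation*}
which closes the bootstrap for $\|h'_0\|_{\Niid_r}\leq x_0'(\delta,U)$ sufficiently small, and also shows that $\|\partial_r h\|_{\yupII}$ can be made arbitrarily small by decreasing $x_0'$.

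A companion integration of $Dh=G(h-\bar h)$ along characteristics, using Lemma~\ref{main} with $m=1$, produces an $R$-independent $L^\infty$ bound $\|h\|_{\yuo}\leq C(\|h_0\|_{L^\infty}+(1+U)\|\partial_r h\|_{\yupII})$. Higher $C^k$ regularity is obtained by commuting~\eqref{mainEq} further with $\partial_r$ (as allowed by the $C^{k+2}$ hypothesis on $h_0$, so that $\partial^2_r h(0,\cdot)=h''_0\in\Niiid_r$ provides control of the next derivative along characteristics) and running analogous estimates; uniqueness follows from a Gronwall-type argument on the difference of two solutions along the $D$-characteristics. For the continuity step: if $R^*<\infty$, the uniform bounds above together with a local extension via Proposition~\ref{LocalTimeRadius} on a shifted region, patched with the existing solution, yield an extension past $R^*$, contradicting maximality; hence $R^*=+\infty$.

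The hard part will be closing the bootstrap for $\|\partial_r h\|_{\yupII}$ uniformly in $r_1$: the exponential decay of $e^{2\int G\,ds}$ supplied by Lemma~\ref{main} must precisely cancel the linear growth of $G$ and the slow decay of $J\partial_r\bar h$ at large $r$, while preserving the target weight $(1+r)^{-(2-\delta)}$, leaving almost no room in the power counting. In particular, the hypothesis $\delta<1/2$ (already used in~\eqref{fbarfrough} to control $f-\bar f$, and hence $J$) is what ensures that the nonlinear term decays strictly faster than the homogeneous weight and is integrable in the characteristic integral.
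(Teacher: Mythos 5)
Your a priori estimates are essentially the ones the paper uses (the same integration of~\eqref{D_partial_h} along characteristics, the same application of Lemma~\ref{lem1} and of Lemma~\ref{main} with $m=2$, $p=3-3\delta$, and the same power counting closing at weight $(1+r)^{-(2-\delta)}$), but the existence mechanism you propose has a genuine gap: a continuity argument in the radial extent has no way to actually \emph{produce} a solution on $[0,U]\times[0,R']$ for $R'>R^*$. Proposition~\ref{LocalTimeRadius} only yields a solution on a square $[0,\tau]^2$ with $\tau$ controlled by the data on $[0,\tau_0]$ — taking $\tau_0$ large does not enlarge the guaranteed radial extent for a \emph{fixed} $U$ — and it cannot be applied ``on a shifted region'' because the system is non-local in $r$: $\bar h$, $f$, $\tilde f$, $G$ and $J$ are all defined by integrals from $r=0$, so there is no well-posed characteristic problem posed on $\{r>R^*\}$ alone. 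A priori estimates presuppose the solution whose existence is at stake, so defining $R^*$ and bounding the norm on $[0,U]\times[0,R^*]$ does not, by itself, rule out $R^*<\infty$. This is precisely the difficulty the paper flags as obstacle (iv): a direct treatment of $Dh=G(h-\bar h)$ does not control $\partial_r h$, and no natural radial continuation is available.

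The paper's route is different in exactly this respect: it \emph{constructs} the solution by iterating on the commuted equation, solving the linear problems~\eqref{linearDrProblem} for $w_{n+1}\approx\partial_r h$ with coefficients built from $h_n$, and reconstructing $h_{n+1}$ by radial integration from $r=0$ using the boundary data $h_U(u,0)$ supplied by the local (in radius \emph{and} time) solution of Proposition~\ref{LocalTimeRadius}. The uniform bounds of Lemma~\ref{LemmaIteration} (which are your estimates, applied to the iterates, and which additionally require the second-derivative bound $\|\partial_r w_n\|_{\yupIII}\leq x''$), the contraction of Lemma~\ref{lemma-4}, the verification in~\eqref{eqh0} that the limit solves the original equation up to a boundary term, and the uniqueness statement of Proposition~\ref{propUniq} (forcing $h=h_U$ near $r=0$) are all needed to convert the estimates into existence. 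Your proposal is missing this construction; as written, the ``patching past $R^*$'' step would fail. (A minor further inaccuracy: backward characteristics avoid the axis because $r$ increases as $u$ decreases in the region $r<1$, not because $\tilde f$ keeps a definite sign along a characteristic — it changes sign near $r=1$.)
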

The proof of this result is based on the coming Lemma \ref{LemmaIteration}, Lemma \ref{lemma-4} and Proposition \ref{propUniq}. The proof relies on the construction of a sequence of functions $\left(h_n\right)_{n\in \mathbb{N}}$, that contracts in an appropriate space. We start by defining
$$
f_n:=f[h_n],\qquad\quad\tilde f_n:=\tilde f[h_n],\qquad\qquad G_n:=G[h_n], \qquad\qquad J_n:=J[h_n],
$$
together with the operators
$$
D_n:=\frac{\partial}{\partial u}-\frac{\tilde f_n}{2}\frac{\partial}{\partial r}
$$
and corresponding characteristics through $(u_1,r_1)$ given by
$$
\chi_n(u)=\chi_n(u;u_1,r_1)=(u; r_n(u;u_1,r_1)).
$$
 The desired sequence is constructed as follows:  Set $w_1(u,r):=h'_0(r)$ 
 and, for $n\geq1$, given~\eqref{D_partial_h} define $w_{n+1}$ as the solution to the linear problem
\begin{equation}
\label{linearDrProblem}
 \left\{
\begin{array}{l}
 D_{n}w_{n+1}=2G_{n}w_{n+1}-J_n\frac{h_n-\bar{h}_{n}}{r}\\
 w_{n+1}(0,r)=h'_{0}(r)\;,
\end{array}
\right.
\end{equation}
with
 \begin{equation}
\label{defhn}
h_{n}(u,r):=h_{U}(u,0)+\int_0^r w_{n}(u,s)ds\;,
 \end{equation}
 where, for a small enough $U>0$, $h_{U}$ is the unique solution in $C^1([0,U]^2)$ to the problem
\begin{equation}
\label{IVPboundary}
 \left\{
\begin{array}{l}
 Dh=G(h-\bar{h})\\
 h(0,r)=h_{0}(r)\;,
\end{array}
\right.
\end{equation}
 as provided by Proposition~\ref{LocalTimeRadius}.

\begin{lem}
\label{LemmaIteration}
Under the conditions of Theorem~\ref{localUGlobalR}, there exists $x_0'=x_0'(\delta)>0$ such that, if
\begin{equation}
\label{bound-h-dot}
\|h'_0\|_{\Niid_r} \leq x_0' \;,
\end{equation}
then there exist constants $C>0$, $x'>0$ and  $x''>0$ for which
\begin{equation}
\|h_n\|_{\yuo}\leq \sup_{0\leq u\leq U} |h_U(u,0)|+ C x'\;,
\end{equation}
\begin{equation}
\label{wBound}
\|w_n\|_{\yupII}\leq  x'\;,
\end{equation}
and
\begin{equation}
\|\partial_rw_n\|_{\yupIII}\leq  x''\,,
\end{equation}
for all $n\in\mathbb{Z}^+$.

Moreover, $x'$ can be made arbitrarily small by decreasing $x_0'$, and $x''$ can be made arbitrarily small by decreasing both $x_0'$ and $\|h''_0\|_{\Niiid_r}$.
\end{lem}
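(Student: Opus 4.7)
The plan is induction on $n$, propagating the three bounds simultaneously. The base case $n=1$ is essentially a restatement of the hypotheses: $w_1=h_0'$ gives $\|w_1\|_{\yupII}\leq\|h_0'\|_{\Niid_r}\leq x_0'$, while $\partial_r w_1=h_0''$ is controlled by $\|h_0''\|_{\Niiid_r}$; and $h_1(u,r)=h_U(u,0)+\int_0^r h_0'(s)\,ds$ is bounded uniformly because $\int_0^\infty(1+s)^{-(2-\delta)}\,ds<\infty$.

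For the inductive step, assume the three bounds hold at level $n$ with $x',x''$ to be chosen. The key step is controlling $w_{n+1}$. Rewrite \eqref{linearDrProblem} as $(D_n-2G_n)w_{n+1}=-J_n(h_n-\bar h_n)/r$ and integrate along the characteristics $\chi_n$ of $D_n$ using the integrating factor $e^{-2\int_0^u G_n}$: evaluating at $(u_1,r_1)$,
\begin{equation*}
w_{n+1}(u_1,r_1)=h_0'(r_n(0))\,e^{2\int_0^{u_1}G_n\,ds}-\int_0^{u_1}e^{2\int_u^{u_1}G_n(\sigma,r_n(\sigma))\,d\sigma}\,J_n\,\frac{h_n-\bar h_n}{r_n}\,du.
\end{equation*}
For the homogeneous piece, combine $|h_0'(r_n(0))|\lesssim x_0'(1+r_n(0))^{-(2-\delta)}$ with Lemma~\ref{main} at $m=2$ and the fact that characteristics started at $r_1$ satisfy $1+r_n(0)\lesssim 1+r_1$ (in the outgoing regime they are non-decreasing in $u$, while in the inner region both are bounded by $R$): this produces a contribution $\lesssim x_0'(1+r_1)^{-(2-\delta)}$. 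For the source, Lemma~\ref{lem1} gives $|J_n|\lesssim(x')^2(1+r)^{-(1-2\delta)}$ and \eqref{hhbar} applied to $h_n$ (whose radial derivative is $w_n$) gives $|h_n-\bar h_n|/r\lesssim x'(1+r)^{-(2-\delta)}$; the integrand is therefore bounded by $(x')^3(1+r_n(u))^{-(3-3\delta)}$. Since $\delta<1/2$ we have $2m=4>p+1=4-3\delta$, so \eqref{main-estimate} is applicable and yields a contribution $\lesssim(x')^3(1+u_1)(1+r_1)^{-(4-3\delta)}$. Setting $x'=Kx_0'$ with $K$ large, the inductive bound $\|w_{n+1}\|_{\yupII}\leq x'$ closes provided $CK^2x_0'^{2}(1+U)\leq 1/2$, which fixes a first smallness condition on $U$ (depending on $x_0'$ and $\delta$).

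The bound on $h_{n+1}$ is then immediate: $|h_{n+1}(u,r)-h_U(u,0)|\leq\int_0^r|w_{n+1}(u,s)|\,ds\leq x'\int_0^\infty(1+s)^{-(2-\delta)}\,ds\lesssim x'$. For $\partial_r w_{n+1}$, differentiate \eqref{linearDrProblem} in $r$ and use $\partial_r\tilde f_n=2G_n$ to obtain
\begin{equation*}
(D_n-3G_n)\partial_r w_{n+1}=2(\partial_r G_n)w_{n+1}-\partial_r\!\left[J_n\,\tfrac{h_n-\bar h_n}{r}\right],
\end{equation*}
and integrate along $\chi_n$ with $m=3$ in Lemma~\ref{main}. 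The homogeneous initial term is $h_0''$, controlled by $(1+r_1)^{-(3-\delta)}\|h_0''\|_{\Niiid_r}$ after using \eqref{expG} with $m=3$ (which gives the decay factor $(1+r_n(0))^{6-\varepsilon}/(1+r_1)^{6-\varepsilon}$, absorbing the $-(3-\delta)$ weight). For the source, one needs weighted bounds for $\partial_r G_n$ and for $\partial_r J_n$, which follow from \eqref{drfEst}, \eqref{fbarfrough} and the formulas for $G$, $J$; these estimates reproduce the decay $(1+r)^{-(3-\delta+\text{extra})}$ after multiplication by $w_n$ or by $(h_n-\bar h_n)/r$. Applying \eqref{main-estimate} with a sufficiently small $p<5$ again closes the bound $\|\partial_r w_{n+1}\|_{\yupIII}\leq x''$, with $x''$ controlled by $\|h_0''\|_{\Niiid_r}+K'x_0'$ and made small by shrinking both quantities.

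The main technical obstacle is the $\partial_r w_{n+1}$ estimate: the commutator between $\partial_r$ and $D_n-2G_n$ is exactly what makes the $(D_n-3G_n)$ operator appear, so the decay gained by Lemma~\ref{main} at $m=3$ must beat the loss from the extra $r$-derivative hitting $J_n$ and $(h_n-\bar h_n)/r$. The inequality $\delta<1/2$ is used precisely to keep $J_n$ decaying in $r$, which in turn keeps $p+1<2m$ both at the level of $w_{n+1}$ and at the level of $\partial_r w_{n+1}$. All smallness is ultimately traced back to $x_0'$ and $\|h_0''\|_{\Niiid_r}$, with a finite $U=U(\delta,x_0')$ absorbing the linear-in-$u_1$ factors from Lemma~\ref{main}.
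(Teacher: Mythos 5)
Your proposal follows essentially the same route as the paper's proof: induction on $n$, integration of \eqref{linearDrProblem} along the characteristics of $D_n$, Lemma~\ref{main} with $m=2$ and $p=3-3\delta$ (using $\delta<1/2$) for the $w_{n+1}$ bound, the commuted equation with the $(D_n-3G_n)$ operator and $m=3$ for $\partial_r w_{n+1}$, and the same closing of the smallness constants. The one imprecision is that $\partial_r G_n$, $\partial_r J_n$ and $J_n/r$ are only shown to be \emph{bounded} in the paper (they do not decay in $r$, contrary to what your phrasing suggests); this is harmless because the factors $w_{n+1}$ and $(h_n-\bar h_n)/r$ already provide the $(1+r)^{-(2-\delta)}$ decay of the integrand, and \eqref{main-estimate} with $p=2-\delta<5$ gains the extra power needed to reach the $(1+r_1)^{-(3-\delta)}$ weight.
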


\begin{proof}

The proof is by induction.
For $n=1$ the result follows by noting that
$$\|w_1\|_{\yupII} = \|h'_0\|_{\Niid_r}=:x_0'\;,$$
as well as
$$\|\partial_r w_1\|_{\yupIII} = \|h''_0\|_{\Niiid_r}=:x''_0\;,$$
and that, by setting $b_0:=\sup_{0\leq u\leq U}|h_U(u,0)|$, we have
$$|h_1(u,r)|\leq b_0+ \int_0^r \frac{x'_0}{(1+s)^{2-\delta}} ds = b_0 + C x_0'\;.$$
Now assume, as induction hypothesis, that
\begin{equation}
\|h_n\|_{\yuo}\leq b_0+C x'\;,
\end{equation}
\begin{equation}
\|w_n\|_{\yupII}\leq  x'
\end{equation}
and
\begin{equation}
\|\partial_rw_n\|_{\yupIII}\leq  x''\,,
\end{equation}
for some $C>0$, $x'\geq x'_0$ and $x''\geq x_0''$ to be prescribed during the induction.

Integrating~\eqref{linearDrProblem} along its characteristics leads to
\begin{equation}
\label{drhn}
w_{n+1}(u_{1},r_{1})
 =
 h'_{0}(r_n(0))\,e^{\int_0^{u_1}2 G_n(s,r_n(s))ds}
 -
 \int_0^{u_1}\frac{J_{n}(h_n-\bar{h}_n)(u,r_n(u))}{r_n(u)}e^{\int_u^{u_1}2 G_n(s,r_n(s))ds}du\;,
\end{equation}
and by the triangular inequality
$$\left|w_{n+1}(u_{1},r_{1})\right| \leq
\mathrm{I_0}+\mathrm{I_1}\;,$$
where the definition of $\mathrm{I_0}$ and $\mathrm{I_1}$ should be obvious.

Note that $\partial_r h_n=w_n$. Then, by assuming that $x'$, and consequently $x_0'$, is sufficiently small, we can use~\eqref{expG} to conclude that
\begin{equation}
\label{I0control}
 e^{\int_0^{u_1}2 G_n(s,r_n(s))ds}\lesssim \left(\frac{1+r_n(0)}{1+r_1}\right)^3\;,
\end{equation}
from which we can estimate, using \eqref{bound-h-dot},
 \begin{eqnarray*}
 \label{I0-estimate}
 |(1+r_1)^{2-\delta}\mathrm{I_0}| &\lesssim& (1+r_1)^{2-\delta} \frac{x_0'}{(1+r_n(0))^{2-\delta}}\left(\frac{1+r_n(0)}{1+r_1}\right)^3
 \\
 &\lesssim &
 \left(\frac{1+r_n(0)}{1+r_1}\right)^{1+\delta} x_0'
 \\
 &\lesssim& x_0'\;,
 \end{eqnarray*}
 where to establish the last inequality and \eqref{I0control} we used the fact that, according to~\eqref{Characteristic_ODE} and~\eqref{ftildeLargeR}, all characteristics with a sufficiently large $r_n(0)$ are increasing in $u$.

Also using~\eqref{Jestimate},~\eqref{hhbar}, the induction hypothesis and~\eqref{main-estimate}, with $m=2$ and $p=3-3\delta$, we see that
\begin{eqnarray*}
 |(1+r_1)^{2-\delta}\mathrm{I_1}|&\lesssim& (1+r_1)^{2-\delta} \int_0^{u_1}  \frac{(x')^3}{(1+r_n(u))^{3-3\delta}}e^{\int_u^{u_1}2 G_n(s,r_n(s))ds}du
 \\
 &\lesssim&
 (1+r_1)^{2-\delta} \frac{(x')^3}{(1+r_1)^{4-3\delta}}
  \\
 &\lesssim&
 (x')^3\;.
\end{eqnarray*}
Summing up the last two estimates we conclude that
\begin{equation}
\label{wn-control}
\|w_{n+1}\|_{\yupII}\lesssim x_0' +(x')^3 ,
\end{equation}
which can be made smaller than $x'$ by choosing appropriately small values for $x_0'$ and $x'$.

As an immediate consequence, we can now close the induction for the sequences $(h_n)$ and $(w_n)$ by noting (as before) that
$$|h_{n+1}(u,r)|\leq b_0+ \int_0^r \frac{x'}{(1+s)^{2-\delta}} ds \leq b_0 +  C\,x' \;.$$
To deal with the remaining sequence we derive the following evolution equation directly from \eqref{drbar} and \eqref{linearDrProblem}
\begin{eqnarray}
\nonumber
D_{n}\partial_rw_{n+1}
-3G_{n}\partial_rw_{n+1}
&=&
2\partial_r G_n w_{n+1}-\partial_rJ_n\frac{h_n-\bar{h}_n}{r}
\\
\label{evoldrw}
&\quad&
+\frac{J_n}{r}\left[2\frac{h_n-\bar{h}_n}{r}-w_n\right] \;,
\end{eqnarray}
which when integrated along the corresponding characteristics gives rise to
\begin{equation}
\label{drwn}
 \begin{aligned}
\left|\partial_rw_{n+1}(u_{1},r_{1})\right|
 &\le
 \left|h''_{0}(r_n(0))\right|\,e^{\int_0^{u_1}3 G_n(s,r_n(s))ds}
 \\
 &\quad
 +
 \int_0^{u_1}\left|2\partial_r G_n w_{n+1}\right|(u,r_n(u)) e^{\int_u^{u_1}3 G_n(s,r_n(s))ds}du
 \\
 &\quad
 +
 \int_0^{u_1}\left|\partial_rJ_n\frac{h_n-\bar{h}_n}{r}\right|(u,r_n(u)) e^{\int_u^{u_1}3 G_n(s,r_n(s))ds}du
 \\
 &\quad
 +
 \int_0^{u_1}\left|\frac{J_n}{r}\left[2\frac{h_n-\bar{h}_n}{r}-w_n\right]\right|(u,r_n(u)) e^{\int_u^{u_1}3 G_n(s,r_n(s))ds}du
 \\
&=\mathrm{II_0}+\mathrm{II_1}+\mathrm{II_2}+\mathrm{II_3}\;.
\end{aligned}
\end{equation}
Arguing as in~\eqref{I0-estimate}  we easily obtain
$$|(1+r_1)^{3-\delta}\mathrm{II_0}|\lesssim x_0''\;.$$
To control the remaining terms in \eqref{drwn} we start by noticing that, using~\eqref{defJ},~\eqref{fEstimate},~\eqref{drfEst} and~\eqref{absGestimate}, we get
\begin{equation}
\label{Jr}
 \left|\frac{J_n}{r}\right|\leq 3 \left|\frac{G_n}{r}\right|+3 f +\left|\frac{1+3r^2}{2r}\frac{rx'}{1+r^2}\right|\lesssim 1\;,
\end{equation}
 from which it follows that
\begin{equation}
|\partial_r G_n|= \left|\frac{G_n-J_n}{r}\right|\lesssim 1 \;.
\end{equation}
Then, it becomes clear that from \eqref{wn-control} we get
$$|\partial_r G_n w_{n+1} |(u,r_n(u)) \lesssim \frac{x'}{(1+r_n(u))^{2-\delta}} \;,$$
from which, using~\eqref{main-estimate} with $m=3$ and $p=2-\delta$, we can conclude that
 $$|(1+r_1)^{3-\delta}\mathrm{II_1}| \lesssim (1+r_1)^{3-\delta} \frac{1}{(1+r_1)^{3-\delta}} x' =x'\;.$$
In a similar fashion, by using~\eqref{Jr},~\eqref{hhbar} and the induction hypothesis on $w_n$ we see that
$$\left|\frac{J_n}{r}\left[2\frac{h_n-\bar{h}_n}{r}-w_n\right]\right| (u,r_n(u)) \lesssim \frac{x'}{(1+r_n(u))^{2-\delta}} \;,$$
from which we obtain
$$|(1+r_1)^{3-\delta}\mathrm{II_3}| \lesssim x'\;.$$
To control the final term $\mathrm{II_2}$ we consider the following expression, obtained from~\eqref{defJ} and~\eqref{metricQuoficients},
\begin{equation}
\label{J-control}
J_n=3G_n+3r f_n- \frac{(1-3r^2) f_n(h_n-\bar{h}_n)^2}{4r}\;,
\end{equation}
 and differentiate it with respect to $r$ to get that
 $$|\partial_r J_n|\lesssim 1\;. $$
Then, using~\eqref{hhbar}, we can easily see that
$$\left|\partial_rJ_n\frac{h_n-\bar{h}_n}{r}\right| (u,r_n(u)) \lesssim \frac{x'}{(1+r_n(u))^{2-\delta}}\;,$$
which gives
$$|(1+r_1)^{3-\delta}\mathrm{II_2}| \lesssim x'\;.$$
Summing up the relevant estimates we find
\begin{equation}
\label{drwControl}
 \|\partial_r w_{n+1}\|_{\yupIII} \lesssim x_0'' + x' =: x''\;,
\end{equation}
which concludes the proof of the lemma.
 \end{proof}
We are now ready to establish the following:
\begin{lem}
\label{lemma-4}
For $\|h_0'\|_{\yupII}$ sufficiently small, the sequence $(h_n)$ contracts in $\yuo$.
\end{lem}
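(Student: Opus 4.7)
The plan is to derive an evolution equation for $\Delta w_n := w_{n+1} - w_n$, integrate it along characteristics, and use the resulting contraction for $\|\Delta w_n\|_{\yupII}$ to conclude the stated contraction for $\Delta h_n := h_{n+1} - h_n$ in $\yuo$. The key preliminary observation is that by \eqref{defhn} every $h_k$ shares the same boundary value $h_U(u,0)$ at $r=0$, hence $\Delta h_n(u,r) = \int_0^r \Delta w_n(u,s)\,ds$, and integrability of the weight $(1+s)^{-(2-\delta)}$ yields $\|\Delta h_n\|_{\yuo} \lesssim \|\Delta w_n\|_{\yupII}$.

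Subtracting \eqref{linearDrProblem} at indices $n$ and $n-1$, and using $D_n - D_{n-1} = -\tfrac{1}{2}(\tilde f_n - \tilde f_{n-1})\partial_r$ together with $2G_n w_{n+1} - 2G_{n-1}w_n = 2G_n \Delta w_n + 2(G_n - G_{n-1})w_n$, one finds that $\Delta w_n$ satisfies the linear inhomogeneous equation $(D_n - 2G_n)\Delta w_n = \mathcal{S}_n$ with $\Delta w_n(0,\cdot) = 0$, where
$$\mathcal{S}_n = \tfrac{1}{2}(\tilde f_n - \tilde f_{n-1})\partial_r w_n + 2(G_n - G_{n-1})w_n - \frac{(J_n - J_{n-1})(h_n - \bar h_n)}{r} - \frac{J_{n-1}\,[(h_n - \bar h_n) - (h_{n-1} - \bar h_{n-1})]}{r}.$$

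The core step is to establish Lipschitz estimates for the coefficient differences $f_n - f_{n-1}$, $\tilde f_n - \tilde f_{n-1}$, $G_n - G_{n-1}$ and $J_n - J_{n-1}$ in terms of $x' \|\Delta w_{n-1}\|_{\yupII}$ with appropriate $r$-weights. Writing $f_n = \exp(A_n)$ with $A_n := \tfrac{1}{2}\int_0^r (h_n - \bar h_n)^2/s\,ds$ and applying \eqref{hhbar} both to each $h_k - \bar h_k$ (giving $\lesssim x' s(1+s)^{-(2-\delta)}$) and to $\psi_n := h_n - h_{n-1}$ (giving $|\psi_n - \bar\psi_n|\lesssim r(1+r)^{-(2-\delta)}\|\Delta w_{n-1}\|_{\yupII}$), a mean-value argument together with the formulas \eqref{metricQuoficients}, \eqref{G}, \eqref{defJ} and the induction bounds from Lemma \ref{LemmaIteration} yields the desired Lipschitz estimates. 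Combining with the pointwise bounds on $w_n$, $\partial_r w_n$ and $h_n - \bar h_n$, each of the first three contributions to $\mathcal{S}_n$ carries at least one factor of $x'$ or $x''$ multiplied by radial decay of order $(1+r)^{-(1-\delta)}$ or better. For the fourth term, where $|J_{n-1}/r|\lesssim 1$ carries no $x'$-smallness, one balances this uniform bound against the quadratic estimate $|J_{n-1}|\lesssim (x')^2(1+r)^{-(1-2\delta)}$ of \eqref{Jestimate} together with the refined estimate above for $\psi_n - \bar\psi_n$, to obtain the uniform bound $\lesssim (x')^2 (1+r)^{-(3-3\delta)}\|\Delta w_{n-1}\|_{\yupII}$.

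Integrating $(D_n - 2G_n)\Delta w_n = \mathcal{S}_n$ along the characteristics of $D_n$ with vanishing initial data, and applying Lemma \ref{main} with $m = 2$ and $p$ chosen separately for each source contribution (e.g.\ $p = 3-3\delta$ for the fourth term), the source estimates combine to yield $\|\Delta w_n\|_{\yupII} \leq C_\delta (1+U)\, x'\, \|\Delta w_{n-1}\|_{\yupII}$, which is a strict contraction once $x_0'$ is chosen sufficiently small (recall that $x' \to 0$ as $x_0' \to 0$). Contraction of $(h_n)$ in $\yuo$ then follows from $\|\Delta h_n\|_{\yuo}\lesssim \|\Delta w_n\|_{\yupII}$. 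The main obstacle is the fourth term of $\mathcal{S}_n$: its coefficient $J_{n-1}/r$ admits only a uniform bound of order $1$, so the smallness required for contraction cannot come from a Lipschitz cancellation and must instead be extracted by carefully balancing this uniform bound against the quadratic estimate \eqref{Jestimate}; this is precisely why the contraction must be set up in the weighted norm $\yupII$ on $\Delta w_n$ rather than directly on $\Delta h_n$ in $\yuo$.
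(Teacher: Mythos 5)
Your proposal is correct and follows essentially the same strategy as the paper: derive the evolution equation satisfied by $w_{n+1}-w_n$ (your source $\mathcal{S}_n$ is the paper's, up to an equivalent redistribution of the difference $J_n(h_n-\bar h_n)-J_{n-1}(h_{n-1}-\bar h_{n-1})$ between its two factors), integrate along the characteristics of $D_n$ with vanishing data at $u=0$, control the coefficient differences $f_n-f_{n-1}$, $\tilde f_n-\tilde f_{n-1}$, $G_n-G_{n-1}$, $J_n-J_{n-1}$ by Lipschitz estimates, and invoke Lemma~\ref{main} to convert the characteristic integrals into weighted bounds. The one genuine difference is bookkeeping: you close the contraction at the level of $\|w_{n+1}-w_n\|_{\yupII}$ and only afterwards integrate in $r$ to get $(h_n)$ Cauchy in $\yuo$, whereas the paper measures the coefficient differences directly in terms of $\|h_n-h_{n-1}\|_{\yuo}$ and closes the loop there via estimate~\eqref{alpha}. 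Your variant buys a cleaner treatment of the critical term $J_{n-1}(\psi_n-\bar\psi_n)/r$ with $\psi_n=h_n-h_{n-1}$: applying \eqref{hhbar} to $\psi_n$ supplies the factor of $r$ that cancels the $1/r$, after which \eqref{Jestimate} alone provides the smallness, so you never need a small, radially decaying bound on $J/r$ itself near $r=0$ (which \eqref{Jr} does not literally provide). Your closing claim that the contraction \emph{must} be set up on $\Delta w_n$ in $\yupII$ is an overstatement --- the paper does close it on $\Delta h_n$ in $\yuo$ --- but this does not affect the validity of your argument.
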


\begin{proof}
A straightforward computation reveals that the difference between successive terms of the sequence $(w_n)$ satisfies the evolution equation
\begin{eqnarray}
\nonumber
D_{n}(w_{n+1}-w_n)
-2G_{n}(w_{n+1}-w_n)
&=&
2 (G_n-G_{n-1}) w_{n}+\frac{1}{2}(\tilde f_n-\tilde f_{n-1})\partial_rw_n
\\
\label{evoldrw}
&\quad&
-\frac{J_n}{r}\left( (h_n-\bar{h}_n)- (h_{n-1}-\bar{h}_{n-1})\right)\\
&\quad&
-\frac{J_n-J_{n-1}}{r}(h_{n-1}-\bar h_{n-1})\nonumber
\;.
\end{eqnarray}
After noticing that $(w_{n+1}-w_n)(0,r)\equiv 0$, the integration of the previous equation along the corresponding characteristics gives rise to
\begin{equation}
\label{difwn}
 \begin{aligned}
\left|w_{n+1}-w_n\right|(u_{1},r_{1})
 &\le
\int_0^{u_1}\left|2 (G_n-G_{n-1}) w_{n}\right|(u,r_n(u)) e^{\int_u^{u_1}2 G_n(s,r_n(s))ds}du
 \\
 &\quad
 +
 \int_0^{u_1}\left|\frac{1}{2}(\tilde f_n-\tilde f_{n-1})\partial_rw_n \right|(u,r_n(u)) e^{\int_u^{u_1}2 G_n(s,r_n(s))ds}du
 \\
 &\quad
 +
 \int_0^{u_1}\left|\frac{J_n}{r}\left( (h_n-\bar{h}_n)- (h_{n-1}-\bar{h}_{n-1})\right)\right|(u,r_n(u)) e^{\int_u^{u_1}2 G_n(s,r_n(s))ds}du
 \\
  &\quad
 +
 \int_0^{u_1}\left| \frac{J_n-J_{n-1}}{r}(h_{n-1}-\bar h_{n-1})\right| (u,r_n(u)) e^{\int_u^{u_1}2 G_n(s,r_n(s))ds} du.
\end{aligned}
\end{equation}
Now, from~\eqref{hhbar} and \eqref{wBound} we have
\begin{equation*}
\left|(h_{n}-\bar{h}_{n})+(h_{n-1}-\bar{h}_{n-1})\right|
\lesssim
\frac{r}{(1+r)^{2-\delta}}\,x' \;,
\end{equation*}
while  using~\eqref{hhbar2} we get
\begin{equation*}
\left|(h_{n}-\bar{h}_{n})-(h_{n-1}-\bar{h}_{n-1})\right|
\leq 2 \|h_n-h_{n-1}\|_{\yuo}\;,
\end{equation*}
so that
\newcommand{\hMh}{\left\|h_{n}-h_{n-1}\right\|_{\yuo}}
\begin{eqnarray}
\nonumber
\left|(h_{n}-\bar{h}_{n})^2-(h_{n-1}-\bar{h}_{n-1})^2\right|
\label{hhbarSq}
&\lesssim& \frac{r}{(1+r)^{2-\delta}}\,x'\, \left\|h_{n}-h_{n-1}\right\|_{\yuo}\;,
\end{eqnarray}
and then, from~\eqref{metricQuoficients} and~\eqref{hhbar}, we find
\begin{eqnarray}
\nonumber
|f_{n}-f_{n-1}|
&=&
\left|\exp\left(\frac{1}{2}\int_0^r\frac{(h_{n}-\bar{h}_{n})^2}{s}\right)
-\exp\left(\frac{1}{2}\int_0^r\frac{(h_{n-1}-\bar{h}_{n-1})^2}{s}\right)\right|
\\
\nonumber
&\lesssim &
\frac{1}{2}\int_0^r\frac{\left|(h_{n}-\bar{h}_{n})^2-(h_{n-1}-\bar{h}_{n-1})^2\right|}{s}ds
\\
\nonumber
&\lesssim&
\,x'\,\hMh\int_0^r\frac{1}{(1+s)^{2-\delta}}ds
\\
\label{deltaf}
&\lesssim&
\,x'\,\hMh \;.
\end{eqnarray}
Therefore, by recalling~\eqref{metricQuoficients}, we have
\begin{equation}
\label{deltaTildef}
\begin{aligned}
|\tilde{f}_{n}-\tilde{f}_{n-1}|
&=
\left|\frac{1}{r}\int^{r}_{0}(f_{n}-f_{n-1})(1-3 s^2)ds\right|
\\
&\lesssim
\,x'\,\hMh (1+r)^2\;,
\end{aligned}
\end{equation}
while from~\eqref{G} and \eqref{metricQuoficients}, we get
\begin{equation}
\begin{aligned}
\label{deltaG}
|G_{n}-G_{n-1}|
&=
\frac{1}{2r}\left|(f_{n}-f_{n-1})(1- 3 r^2)-\frac{1}{r}\int^{r}_{0}(f_{n}-f_{n-1})(1-3 s^2)ds\right|
\\
&\lesssim
\,x'\,\hMh (1+r)\;.
\end{aligned}
\end{equation}
In turn, from \eqref{J-control}, \eqref{hhbar} and the previous estimates, we obtain
\begin{equation}
\begin{aligned}
\label{deltaG}
\left| J_{n}-J_{n-1}\right|
&=
\left| 3 (G_n-G_{n-1}) +3r(f_n-f_{n-1})\right.\\
&\quad\left.-\frac{1-3r^2}{4r}\left[(f_n-f_{n-1})(h_n-\bar h_n)^2+f_{n-1}\left((h_n-\bar h_n)^2-(h_{n-1}-\bar h_{n-1})^2\right)\right]\right|
\\
&\lesssim
\,x'\,\hMh (1+r)\;.
\end{aligned}
\end{equation}
If we recall~\eqref{main-estimate},~\eqref{Jr} and argue as in the proof of Lemma~\ref{LemmaIteration}, it becomes clear from \eqref{difwn} that
\begin{equation}
\label{alpha}
\left|w_{n+1}-w_n\right|(u_{1},r_{1})
\lesssim \,(1+x'')x'\,\hMh \frac{1}{(1+r)^{2-\delta}}\;,
\end{equation}
from which we immediately see that
\begin{eqnarray*}
\left|h_{n+1}-h_n\right|(u,r)
&\leq & \int_0^r  \left |w_{n+1}-w_n\right| ds
\\
&\lesssim& \,(1+x'')x'\,\hMh \int_0^r   \frac{1}{(1+s)^{2-\delta}}ds
\\
&\lesssim& \,(1+x'')x'\,\hMh\;.
\end{eqnarray*}
The desired result then follows by decreasing $x'$, which according to Lemma~\ref{LemmaIteration} can be achieved by decreasing $\|h_0'\|_{\yupII}$ and recalling by that $x''$ is a quantity that decreases with $x'$~(see \eqref{drwControl}).
\end{proof}
We have thus concluded that the sequence $(h_n)$ converges uniformly to some continuous function $h:[0,U]\times[0,\infty[\rightarrow \mathbb{R}$. It follows that $f_n=f[h_n]\rightarrow f:=f[h]$, in $\yuo$, by obtaining an estimate similar to~\eqref{deltaf} with $f_{n-1}$ replaced by $f$ and $h_{n-1}$ replaced by $h$. In a similar manner, by adapting~\eqref{deltaTildef}, ~\eqref{deltaG} and~\eqref{Jr} we can conclude that $\tilde f_n\rightarrow \tilde f$, $G_n\rightarrow G$  and $J_n\rightarrow J$, uniformly in  domains of the form $[0,U]\times[0,R]$. Then we can also conclude that each characteristic $r_n(\,\cdot\,; u_1,r_1)$ also converges uniformly, in $[0,U]\times[0,R]$, to a characteristic $r(\,\cdot\,; u_1,r_1)$ of $D=\partial_u-\frac{1}{2}\tilde f[h]\partial_r$. Considering this last fact, we can recycle the proof after~\eqref{deltaChi}. Then, by using~\eqref{hhbar} and~\eqref{drhn}, we see that the sequence $(w_n)$ converges uniformly, in every domain of the form $[0,U]\times[0,R]$, to the function
\begin{equation}
\label{wSol}
w(u_{1},r_{1})
 =
 h'_{0}(r(0))\,e^{\int_0^{u_1}2 G(s,r(s))ds}
 -
 \int_0^{u_1}\frac{J(h-\bar{h})(u,r(u))}{r(u)}e^{\int_u^{u_1}2 G(s,r(s))ds}du\;.
\end{equation}
 The previous function is clearly a continuous solution of
$$Dw=2Gw-J \frac{h-\bar{h}}{r}\;. $$
From~\eqref{defhn} we can now also conclude that
\begin{equation}
\label{hInt}
 h(u,r)=h_U(u,0)+\int_0^r w(u,s)\,ds\;,
\end{equation}
which satisfies
$$\partial_r h =w\;. $$
Then, we are allowed to differentiate~\eqref{wSol} with respect to $r_1$ and conclude that $w\in C^1$. Consequently we can also differentiate~\eqref{hInt} with respect to $u$ to conclude that $h$ is also $C^1$. In this process we established that $\partial_rw=\partial_r^2h$ is continuous.

By differentiating~\eqref{hInt} along $D$ we get
$$Dh(u,r) = \partial_u h_U(u,0) -\frac{1}{2} \tilde f(u,r) w(u,r) + \int_0^r\partial_u w(u,s) ds $$
with
\begin{eqnarray}
 \int_0^r\partial_u w(u,s) ds &=& \int_0^r Dw(u,s)+ \frac{1}{2} \tilde f \partial_r w (u,s) ds
\\
&=&
\int_0^r 2G\partial_r h(u,s)-\frac{J(h-\bar{h})(u,s)}{s} + \frac{1}{2} \tilde f \partial^2_r h (u,s) ds
\\
&=&
\int_0^r \partial_r \left(  \frac{1}{2} \tilde f \partial_r h  + G(h-\bar{h}) \right) (u,s) ds\;.
\end{eqnarray}
Thus, using the fact that $(h-\bar{h})(u,0)=0$, which follows from~\eqref{hhbar}, and recalling that $\tilde f(u,0)=1$, we find
\begin{equation}
 \label{eqh0}
 Dh(u,r) = G(h-\bar{h}) (u,r) + \frac{1}{2} \partial_r \left( h_U(u,0)-h(u,0)\right)\;,
\end{equation}
where we took into account that  $D [h_U](u,0)=G [h_U](u,0)(h_U-\bar h _U)(u,0)=0$ to conclude that $\partial_u h_U(u,0)=\frac{1}{2}\partial_r h_U(u,0)$.

To finish the proof of Theorem~\ref{localUGlobalR} we need the following uniqueness result:
\begin{prop}
\label{propUniq}
 Given $h_U\in C^1([0,1]^2)$ and $h_0\in C^2([0,1])$, with sufficiently small  $\|h_0'\|_{L^{\infty}_r}$, then there exists $0<U\leq 1$, such that the initial value problem
\begin{equation}
\label{uniqIVP}
 \left\{
\begin{array}{l}
Dh(u,r) = G(h-\bar{h}) (u,r) + \frac{1}{2}\partial_r  \left( h_U(u,0)-h(u,0)\right) \\
 D\partial_rh=2G\partial_r h-J\frac{h-\bar{h}}{r}\\
 h(0,r)=h_{0}(r) \\
 \partial_rh(0,r)=h'_{0}(r)\;,
\end{array}
\right.
\end{equation}
 admits at most one solution in $C^1([0,U]^2)$  with continuous second radial derivative.
\end{prop}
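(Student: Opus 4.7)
The plan is a standard Grönwall-based uniqueness argument. Assume $(h_1, h_2)$ are two solutions of \eqref{uniqIVP} on $[0,U]^2$ sharing the data $(h_U, h_0)$, and set $\delta h := h_1 - h_2$ and $\delta w := \partial_r \delta h$, both of which vanish at $u = 0$. Subtracting the first equations for $h_1$ and $h_2$ eliminates the common $h_U$-dependent boundary contribution and, using $D_1 h_1 - D_2 h_2 = D_1 \delta h - \tfrac12 (\tilde f_1 - \tilde f_2)\, \partial_r h_2$, produces the schematic evolution
\begin{equation*}
D_1 \delta h = \tfrac12 (\tilde f_1 - \tilde f_2)\partial_r h_2 + G_1\!\left[(h_1 - \bar h_1) - (h_2 - \bar h_2)\right] + (G_1 - G_2)(h_2 - \bar h_2) - \tfrac12 \delta w(u, 0).
\end{equation*}
The analogous manipulation of the second equation in \eqref{uniqIVP} yields
\begin{equation*}
(D_1 - 2 G_1)\delta w = \tfrac12 (\tilde f_1 - \tilde f_2)\partial_r^2 h_2 + 2(G_1 - G_2)\partial_r h_2 - J_1 \tfrac{\delta h - \overline{\delta h}}{r} - (J_1 - J_2)\tfrac{h_2 - \bar h_2}{r}.
\end{equation*}

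Next I would reprise the chain of Lipschitz-type estimates \eqref{deltaf}--\eqref{deltaG} already developed in the proof of Lemma~\ref{lemma-4}, applied now to the two actual solutions $h_1, h_2$ in place of consecutive iterates, to obtain on $[0,U]^2$
\begin{equation*}
|\tilde f_1 - \tilde f_2| + |G_1 - G_2| + |J_1 - J_2| \lesssim \|\delta h\|_{L^\infty([0,U]^2)},
\end{equation*}
with implicit constants depending only on the (uniformly bounded) norms of $h_i$ and $\partial_r h_i$. The apparent singularities are harmless: a mean-value argument as in \eqref{hhbar0} gives $|(h_i - \bar h_i)/r| \lesssim \|\partial_r h_i\|_{L^\infty}$ and $|(\delta h - \overline{\delta h})/r| \lesssim \|\delta w\|_{L^\infty}$, while $|J_1/r|$ is uniformly bounded thanks to \eqref{Jr}.

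Finally I would integrate both evolution equations along the characteristics of $D_1$ (well-defined since $\tilde f_1 \in C^1$) starting from $u = 0$, where $\delta h$ and $\delta w$ vanish. Taking suprema over $r_1 \in [0, U]$ in the resulting integral identities yields a closed integral inequality
\begin{equation*}
A(u_1) \le C \int_0^{u_1} A(u)\, du, \qquad A(u) := \|\delta h(u, \cdot)\|_{L^\infty} + \|\delta w(u, \cdot)\|_{L^\infty},
\end{equation*}
for $u_1 \in [0, U]$, with $C$ depending on the solution norms; Grönwall's inequality then forces $A \equiv 0$, whence $h_1 = h_2$. The main delicacy is the nonlocal boundary term $\tfrac12 \delta w(u, 0)$ in the $\delta h$ evolution: it obstructs any attempt to close a Grönwall estimate for $\delta h$ alone and is precisely the reason one must couple $\delta h$ and $\delta w$ in the single functional $A(u)$. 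This term is absorbed trivially since $|\delta w(u, 0)| \le \|\delta w(u, \cdot)\|_{L^\infty}$, while the smallness of $\|h_0'\|_{L^\infty_r}$ inherited from the hypotheses guarantees, through Lemma~\ref{lem1}, the a priori control of $G_i$ and $J_i$ needed for the characteristic integrations.
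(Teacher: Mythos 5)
Your argument is correct and follows essentially the same route as the paper: subtract the two solutions, reuse the Lipschitz estimates \eqref{deltaf}--\eqref{deltaG} for $\tilde f$, $G$ and $J$, handle the boundary term $\partial_rh_2(u,0)-\partial_rh_1(u,0)$ through the evolution equation for the difference of radial derivatives (which is exactly where the assumed bound on $\partial_r^2 h$ enters), and integrate along the characteristics of $D_1$ from the vanishing data at $u=0$. The only cosmetic difference is the closing step: you run Gr\"onwall on the coupled quantity $\|\delta h(u,\cdot)\|_{L^\infty}+\|\delta w(u,\cdot)\|_{L^\infty}$, whereas the paper first reduces all source terms to $\|h_2-h_1\|_{L^\infty}$ and absorbs via the factor $(1-CU)$ for small $U$; the two mechanisms are equivalent here.
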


\begin{proof}
 Let $h_1$ and $h_2$ be solutions to the initial value problem under analysis, that satisfy the prescribed regularity conditions. Then
\begin{eqnarray}
\nonumber
D_{1}(h_{2}-h_1)
-G_{1} (h_{2}-h_1)
&=&
(G_2-G_{1}) (h_2-\bar{h}_2)-G_1(\bar{h}_2-\bar{h}_1)
\\
\label{evolUniq}
&\quad&
+\frac{1}{2}\left( \tilde f_2-\tilde f_1\right)\partial_rh_2 -\frac{1}{2}\left(\partial_rh_2(u,0)-\partial_rh_1(u,0)\right)
\;.
\end{eqnarray}
Arguing as in the proof of the previous lemma, we can control all differences of the form $|A_2-A_1|$ in the previous evolution equation, by estimates of the form
$$|A_2-A_1|\lesssim \|h_2-h_1\|_{\yuo}\;.$$
Note that to control the difference $\partial_rh_2(u,0)-\partial_rh_1(u,0)$ we need to obtain an evolution equation for $\partial_rh_2(u,r)-\partial_rh_1(u,r)$ analogous to~\eqref{difwn}, with $w=\partial_r h$ and $n=1$,  which will lead to an estimate of the form
$$|\partial_r h_2-\partial_r h_1|\lesssim \|h_2-h_1\|_{\yuo}\;.$$
The previous method requires to have a bound on the second radial derivatives of $h_1$, whose existence follows from our regularity assumptions.

Integrating~\eqref{evolUniq} along the corresponding characteristics then leads to
$$|h_2-h_1|(u_1,r_1) \leq C \int_0^{u_1} \|h_2-h_1\|_{\yuo} ds = Cu_1  \|h_2-h_1\|_{\yuo}\;,$$
which, for $u_1\leq U$ with small enough $U$,  implies
$$(1-C\,U) \|h_2-h_1\|_{\yuo} \leq 0 \Rightarrow \|h_2-h_1\|_{\yuo}=0\;.$$
\end{proof}
It now follows that $h=h_U$, in $[0,U]^2$, for a small enough $U>0$ and, consequently, since $h$ satisfies~\eqref{eqh0}, then it in fact satisfies~\eqref{mainIVPLocalU}.

To conclude the proof of Theorem~\ref{localUGlobalR} we just need to notice that the higher regularity claims follow from differentiating the integral version of~\eqref{mainEq}  (compare with~\cite[(17)]{CostaSpherically}) and the proof of the uniqueness claim is similar to the proof of Proposition~\ref{propUniq} (compare with~\cite[(59)]{CostaProblem}).

\section{Global well posedness and decay of solutions}
%
\begin{thm}
\label{thm}
\label{global-thr}
Given $0<\delta<1/2$ and $k\in\mathbb{Z}^+$, let $h_0\in C^{k+2}([0,+\infty[)\cap L^{\infty}([0,+\infty[)$ be such that $h'_0\in\Niid([0,+\infty[)$ and $h''_0\in\Niiid([0,+\infty[)$. Under such conditions, there exists $\tilde x_0=\tilde x_0(\delta)>0$ such that, if
\begin{equation}
\label{smallCond0-1}
\|h'_0\|_{\Niid_r} \leq \tilde x_0 \;,
\end{equation}
then the initial value problem
\begin{equation}
\label{mainEq0II}
 \left\{
\begin{array}{l}
  Dh = G\left(h-\bar{h}\right) \\
  h(0,r)= h_0(r)\;,
\end{array}
\right.
\end{equation}
has a unique (global) solution $h\in C^{k}([0,+\infty[\times[0,+\infty[)$.

Moreover
\begin{equation}
\label{boundh2II}
\|\partial_r h(u,\,\cdot\,)\|_{\Niid_r}\leq Ce^{-(1+\delta/2) u}\;,
\end{equation}
with $C>0$ and, given $R>0$, if $\tilde x_0 \leq\underline{x}(R)$, with the later sufficiently small, then there exists $C_R>0$ such that
\begin{equation}
\label{localBound}
\sup_{r\leq R} |\partial_r h(u,\,\cdot\,)|\leq C_R\, e^{-2\,u}\;.
\end{equation}
\end{thm}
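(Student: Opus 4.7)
The plan is a bootstrap/continuity argument built on top of Theorem~\ref{localUGlobalR}. Starting from the local solution on $[0,U_0]\times[0,\infty[$ provided by that theorem, I set $U^*$ to be the supremum of times $U\geq U_0$ for which there exists a solution on $[0,U]\times[0,\infty[$ satisfying the bootstrap bound $\|\partial_r h(u,\,\cdot\,)\|_{\Niid_r}\leq 2C_0 e^{-(1+\delta/2)u}$ for all $u\in[0,U]$, together with a companion bootstrap on $\|\partial_r^2 h(u,\,\cdot\,)\|_{\Niiid_r}$ of the same exponential form (needed both to rerun Lemma~\ref{LemmaIteration} at every time and to propagate the regularity required by Proposition~\ref{propUniq} for uniqueness). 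The goal is to strictly improve these bounds, which combined with local existence and continuity of the maximal time rules out $U^*<+\infty$ and yields global existence; the $C^k$ regularity and the uniqueness assertion then follow by differentiating the characteristic representation formulas and applying Proposition~\ref{propUniq} on overlapping intervals, exactly as in the proof of Theorem~\ref{localUGlobalR}.

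For the improvement, I integrate $(D-2G)\partial_r h=-J\partial_r\bar h$ backward along characteristics of $D$ to obtain
\begin{equation*}
\partial_r h(u_1,r_1)=h_0'(r(0))\,e^{2\int_0^{u_1}G(s,r(s))\,ds}-\int_0^{u_1} J\,\partial_r\bar h\,(u,r(u))\,e^{2\int_u^{u_1}G(s,r(s))\,ds}\,du=:\mathrm{I}_0-\mathrm{I}_1\,.
\end{equation*}
The technical core is a refinement of~\eqref{expG} asserting, along any characteristic,
\begin{equation*}
e^{2\int_u^{u_1}G(s,r(s))\,ds}\lesssim e^{-2(1-\varepsilon)(u_1-u)}\Big(\tfrac{1+r(u)}{1+r_1}\Big)^{4-\varepsilon}\,,
\end{equation*}
combining the polynomial-in-$r_1$ factor already present in~\eqref{expG} with an explicit exponential-in-$u$ factor. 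The latter is obtained from $G\leq-(1-\varepsilon)r$ (cf.~\eqref{Gestimate}) together with the observation that, by the characteristic ODE~\eqref{Characteristic_ODE}, the bound~\eqref{ftildeLargeR}, and a direct comparison with the de Sitter $\coth/\tanh$ trajectories used in~\cite[(30)-(33)]{CostaProblem}, any backward characteristic with $u_1-u$ large must spend most of its Bondi time close to the cosmological horizon $r\approx 1$, giving $\int_u^{u_1} r(s)\,ds\geq(1-\varepsilon)(u_1-u)-O(1)$. Granted this refinement, $\mathrm{I}_0$ is controlled by the radial decay of $h_0'$ alone and produces $(1+r_1)^{2-\delta}|\mathrm{I}_0|\lesssim \tilde x_0\,e^{-2(1-\varepsilon)u_1}$, which is stronger than the bootstrap rate. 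For $\mathrm{I}_1$ one combines the bootstrap with~\eqref{Jestimate} and~\eqref{hhbar} (applied to $\partial_r\bar h=(h-\bar h)/r$) to obtain $|J\,\partial_r\bar h|(u,r(u))\lesssim C_0^3 e^{-3(1+\delta/2)u}(1+r(u))^{-(3-3\delta)}$; inserting the refined characteristic exponential, using $0<\delta<1/2$ so that the resulting spatial power of $(1+r_1)$ is negative, and performing the $u$-integration yields $(1+r_1)^{2-\delta}|\mathrm{I}_1|\lesssim C_0^3 e^{-(1+\delta/2)u_1}$. Choosing $C_0$ suitably and $\tilde x_0$ small enough makes $\tilde x_0+C_0^3\leq C_0/2$, strictly improving the bootstrap. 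The companion bootstrap for $\partial_r^2 h$ is closed by the exactly parallel argument applied to the equation obtained by differentiating~\eqref{D_partial_h} once more in $r$, with no new difficulty.

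For the sharper local-in-radius estimate~\eqref{localBound}, I restrict to $\{r\leq R\}$ and exploit the fact that, after choosing $\tilde x_0\leq\underline{x}(R)$ small, the entire backward characteristic from any point $(u_1,r_1)$ with $r_1\leq R$ is confined to a region on which $|G|\geq c(R)>0$ is uniformly bounded below; this eliminates the spatial-weight loss that produced the $\delta/2$ gap in the global estimate, and rerunning the characteristic integration with sharp constants directly yields the unweighted rate $e^{-2u}$. The main obstacle throughout is the refined characteristic bound highlighted above: Lemma~\ref{main} in the form stated, via~\eqref{main-estimate}, produces only the combination $(1+u_1)/(1+r_1)^{p+1}$, whose linear growth in $u_1$ is fatal to any exponential-in-$u$ bootstrap. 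Extracting instead the $e^{-2(1-\varepsilon)(u_1-u)}$ factor characteristic by characteristic --- precisely the manifestation of difficulty (iii) flagged in the introduction --- is what allows the whole scheme to close, and most of the work of the proof goes into this refinement.
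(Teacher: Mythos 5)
Your overall architecture (a continuity argument seeded by Theorem~\ref{localUGlobalR}, characteristic integration of $(D-2G)\partial_rh=-J\partial_r\bar h$, continuation by local existence) matches the paper's. The substantive divergence, and the gap, is the ``refinement'' you place at the technical core: the claim $e^{2\int_u^{u_1}G}\lesssim e^{-2(1-\varepsilon)(u_1-u)}\bigl(\tfrac{1+r(u)}{1+r_1}\bigr)^{4-\varepsilon}$, i.e.\ the full polynomial weight \emph{and} the full exponential rate simultaneously. The justification you offer --- $\int_u^{u_1}r(s)\,ds\geq(1-\varepsilon)(u_1-u)-O(1)$ --- only produces the exponential factor; to get the product you need the strictly stronger lower bound $\int_u^{u_1}r\,ds\geq(1-\varepsilon)(u_1-u)+2(1-\varepsilon)\log\frac{1+r_1}{1+r(u)}-O(1)$, whose second term is unbounded in $r_1$, and you give no argument for it. Without the polynomial factor your term $\mathrm{I}_0$ is not controlled in the weighted norm (the ratio $(1+r_1)/(1+r(0))$ is unbounded along characteristics reaching large $r_1$), so the bootstrap does not close as written. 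The paper avoids exactly this by writing $G=qG+(1-q)G$ and proving only the interpolated bound \eqref{intGII}: a fraction $q$ of the polynomial weight against a fraction $1-q$ of the exponential rate. The constraint $4q\geq 2-\delta+\varepsilon$ then forces the exponential rate down to $2(1-q)(1-\varepsilon)\approx 1+\delta/2$, which is precisely where the rate in \eqref{boundh2II} comes from. The fact that your refined estimate would upgrade \eqref{boundh2II} to $e^{-2(1-\varepsilon)u}$ --- strictly stronger than the theorem you are proving --- should have been a warning that this single step carries essentially all the difficulty; it is asserted, not proved. (It may well be true: in exact de Sitter one has the identity $e^{-2\int_u^{u_1}r\,ds}=e^{-2(u_1-u)}\bigl(\tfrac{1+r(u)}{1+r_1}\bigr)^4$ along characteristics; but establishing it for the perturbed characteristics, uniformly in $r_1$ and across the regimes $r_1<r_c^-$, $r_c^-\leq r_1\leq R$ and $r_1>R$, is genuine work that your proposal does not contain.)

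Two further points. First, the paper runs the bootstrap on the \emph{bounded} quantity $\sup_{r\leq R}|(1+r)^{2-\delta}\partial_rh|\leq x'$ and extracts the exponential decay a posteriori from a Gr\"onwall inequality (which is also what produces the $-C_2x'$ correction to the rate); bootstrapping the decaying bound directly, as you do, is workable in principle but forces you to assume exactly the sharp rate you cannot yet reach, and the companion bootstrap on $\partial_r^2h$ is not needed for the continuation argument. Second, your argument for \eqref{localBound} is incorrect as stated: for $r_1\leq R$ with $r_1$ small, the backward characteristic has $r(u)$ small near $u=u_1$ and $|G|\leq(1+\varepsilon)r$ there, so $|G|$ is \emph{not} uniformly bounded below on the region swept by such characteristics. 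The paper instead uses the explicit $\tanh$ lower bound \eqref{charExplicit2} to show that backward characteristics leave the axis fast enough that $\int_u^{u_1}r\,ds\geq(1-\varepsilon)(u_1-u)-O(1)$ still holds, yielding \eqref{intN}. Finally, the removal of the $\varepsilon$-loss needed to reach the clean rates $e^{-(1+\delta/2)u}$ and $e^{-2u}$ (deferred in the paper to the final argument of \cite{CostaProblem}) is not addressed in your proposal at all.
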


\begin{Remark}
For general $\Lambda>0$ the exponents in~\eqref{boundh2II} and~\eqref{localBound} should be multiplied by $H:=\sqrt{\frac{\Lambda}{3}}$.
\end{Remark}

\begin{proof}

\newcommand{\cE}{\mathcal E}
\newcommand{\cF}{\mathcal F}

Let us start by considering $\tilde x_0<x_0'$, with $x_0'$ taken from~\eqref{smallCond0}. Then  Theorem~\ref{localUGlobalR} guarantees the existence of $\tilde U=U(\tilde x_0)>0$ and $h\in C^1([0,\tilde U]\times[0,+\infty[)$ solving~\eqref{mainEq0II}.
Now let $U^*\in[\tilde U,\infty[$ be the corresponding maximal time of existence and define, for $0\leq u< U^*$ and $R\gg1$,
\begin{equation}\label{energy1}
\cE_R(u)=\sup_{r\in[0,R]}|(1+r)^{2-\delta}\partial_r h(u,\,\cdot\,)|\;.
\end{equation}
Note that, since the supremum is over a compact set, $\cE_R$ is continuous in $[0,U^*[$.   

For $x'\in]\tilde x_0, x_0'[$, to be specified independently of $R$ during the proof, define also
\begin{equation}\label{cU}
{\cal U}_R:=\left\{u_1\in [0,U^*[ \,:\, \sup_{u\in[0,u_1]}\cE_R(u)\leq x'\right\},
\end{equation}
which is non-empty by Theorem~\ref{localUGlobalR} and closed in view of the continuity of $\cE_R$.

To show that the previous set is also open we integrate~\eqref{D_partial_h} to get
\begin{equation}
\label{drh}
\partial_r h(u_{1},r_{1})=
\partial_r h_0(\chi(0))\,e^{\int_0^{u_{1}}2 G_{|\chi}dv}
-\int_0^{u_1}\left(J\frac{h-\bar{h}}{r}\right)_{|\chi}e^{\int^{u_1}_{u}2 G_{|{\chi}}dv}du\;,
\end{equation}
where $\chi(u)=\chi(u;u_1,r_1)=(u,r(u;u_1,r_1))$ is the characteristic through $(u_1,r_1)$.
Now, according to~\cite[(30),(31)]{CostaProblem}, there exists $r^{-}_{c}=1-\varepsilon$ (where once again $\varepsilon$ is a quantity that can be made arbitrarily small by decreasing $x'$) such that, if $r_1>  r_c^-$ then 
$$r(u;u_1,r_1)>r_c^-,$$
for all $u\leq u_1$, and if $r_1\leq  r_c^-$ then
\begin{equation}
\label{charExplicit2}
r(u;u_1,r_1)\geq (1-\varepsilon)\tanh \left(\frac{1+\varepsilon}{2}(c^--u)\right)\;,
\end{equation}
with $c^-$ an integration constant. It follows that
\begin{equation*}
 -\int^{u_{1}}_{u}r(v)dv
 \leq
\frac{2(1-\varepsilon)}{1+\varepsilon}\log{\left(\frac{\cosh{\left(\frac{1+\varepsilon}{2}(c^--u_1)\right)}}{\cosh{\left(\frac{1+\varepsilon}{2}(c^--u)\right)}}\right)}
\leq 2(1-\varepsilon) \log{\left(2e^{\frac{1+\varepsilon}{2} (u-u_1)}\right)}
\end{equation*}
and then~\eqref{Gestimate} gives
\begin{equation}\label{intN}
 e^{\int^{u_1}_{u}2 G_{|{\chi}}dv}\lesssim  e^{2(1-\varepsilon)(u-u_1)}\;.
\end{equation}
For $r_1<r_c^-$ and $u\leq u_1<U^*$, it follows from~\cite[(30)]{CostaProblem} that $r(u)\lesssim 1$, so that using \eqref{hhbar},~\eqref{Jestimate},~\eqref{energy1},~\eqref{cU} and \eqref{intN} applied to~\eqref{drh} gives
\begin{equation}
\label{energyEst1}
|(1+r_1)^{2-\delta}\partial_r h(u_{1},r_{1})|\lesssim
\cE_R(0)\,e^{-2(1-\varepsilon)u_1}
+ x' \int_0^{u_1}\cE_R(u)e^{2(1-\varepsilon)(u-u_1)}du\;.
\end{equation}
We now consider $r_1>r_c^-$ and start by noticing that, since in such case $r(u)\geq r_c^-=1-\varepsilon$, for all $u\leq u_1$, then inequality~\eqref{Gestimate} gives
\begin{equation}
\label{Gestimate2}
G(u,r(u))\leq -(1-\varepsilon)\;.
\end{equation}
It is then easy to see that the condition $r(u)\leq R$ immediately leads to~\eqref{energyEst1} with the implicit constant depending on $R$. This is relevant for the localized estimate \eqref{localBound}. 

However we also need an estimate which is uniform in $R$, which will arise at the cost of weakening the decay in $u$.  
With that goal in mind we write $G=qG+(1-q)G$, with $q\in[0,1]$, and then by using~\eqref{Gestimate2},~\eqref{expG},~\eqref{intExp} and~\eqref{hypToRII} we arrive at
\begin{eqnarray}
\nonumber
e^{2\int_u^{u_1}G(s,r(s))ds}
&=&
e^{2q\int_u^{u_1}G(s,r(s))ds}e^{2(1-q)\int_u^{u_1}G(s,r(s))ds}
\\
\label{intGII}
&\leq&
C\left(\frac{1+r(u)}{1+r_1}\right)^{4q-\varepsilon}e^{2(1-q) (1-\varepsilon)(u-u_1)}\;.
\end{eqnarray}
In such case, recallling \eqref{Jestimate} and \eqref{hhbar}, we get
\begin{eqnarray*}
|(1+r_1)^{2-\delta}\partial_r h(u_{1},r_{1})|
&\lesssim&
 \left(\frac{1+r_1}{1+r(0)}\right)^{2-\delta-4q+\varepsilon}\cE_R(0)e^{-2(1-q)(1-\varepsilon) u_1}
\\
&\quad&
+\,x'\int_0^{u_1}  \frac{\left(1+r_1\right)^{2-\delta-4q+\varepsilon}}{\left(1+r(u)\right)^{3-3\delta-4q+\varepsilon}}\, \cE_R(u)e^{2(1-q)(1-\varepsilon)(u-u_1)}du\;.
  \end{eqnarray*}
By choosing $q$ such that
$$2-\delta-4q+\varepsilon\leq 0\Leftrightarrow q\geq \frac{2-\delta+\varepsilon}{4}\;,$$
and recalling $0<\delta<1/2$, we conclude that, for $r_1\leq R$,
\begin{equation}
\label{energyEst2}
|(1+r_1)^{2-\delta}\partial_r h(u_{1},r_{1})|\lesssim
\cE_R(0)\,e^{-2(1-q)(1-\varepsilon)u_1}
+ x' \int_0^{u_1}\cE_R(u)e^{2(1-q)(1-\varepsilon)(u-u_1)}du\;.
\end{equation}
So by setting
\newcommand{\dec}{\hat H}
$$\dec:=2(1-q)(1-\varepsilon)\;,$$
since $1-q\leq 1$, we conclude that the estimates~\eqref{energyEst1} and~\eqref{energyEst2} give rise to the following estimate, valid for all $r_1\leq R$,
\begin{equation}
\label{energyEst3}
|(1+r_1)^{2-\delta}\partial_r h(u_{1},r_{1})|\lesssim
\cE_R(0)\,e^{-\dec u_1}
+ x' \int_0^{u_1}\cE_R(u)e^{\dec(u-u_1)}du\;.
\end{equation}
By taking the supremum in $r_1$ we then obtain
\begin{equation}
\label{energyEst4}
\cE_R(u_1)\lesssim
\cE_R(0)\,e^{-\dec u_1}
+ x' \int_0^{u_1}\cE_R(u)e^{\dec(u-u_1)}du\;,
\end{equation}
and if we now define $\cF(u):=e^{\dec u}\cE_R(u)$, the previous estimate translates into
\begin{equation}
\label{energyEst4}
\cF(u_1)\lesssim
\cE_R(0)
+ x' \int_0^{u_1}\cF(u)du\;.
\end{equation}
Applying Gronwall's inequality this leads to
$$\cF(u_1) \leq C_1 \cE_R(0) e^{C_2 x' u_1}$$
which, in terms of the quantity $\cE_R$, is
\begin{equation}
\label{energyControl}
\cE_R(u_1) \leq C_1 \cE_R(0) e^{-(\dec - C_2 x') u_1}\leq  C_1 \tilde x_0 e^{-(\dec - C_2 x') u_1} \;,
\end{equation}
where $C_1, C_2>0$ are constants. We can now choose $x'$ small enough in order to guarantee that $\dec - C_2 x'>0$ and then choose $\tilde x_0$ satisfying $C_1\tilde x_0\leq \frac{1}{2}x'$. For such choices, it becomes clear that ${\cal U}_R$ is open and since it is also closed and non-empty we conclude that ${\cal U}_R=[0,U^*[$. This holds for an arbitrary  $R>>1$, so we find that, for all $r_1>>1$, 
$$|(1+r_1)^{2-\delta}\partial_r h(u_{1},r_{1})|\leq \cE_{r_1}(u_1) \leq x'$$
and since $x'$ does not depend on the choice of $r_1$, by taking suprema of both sides of the last inequality we arrive at
\begin{equation}
\label{globalE}
\cE(u):= \|\partial_r h(u,\,\cdot\,)\|_{\Niid_r}<x' <x_0'\;\;,\; \text{ for all }u\in[0,U^*[\;.
\end{equation}
If we assume that $U^*<\infty$ and choose $0\leq \varepsilon<U'=U(x')$, we can use Theorem~\ref{localUGlobalR} to solve~\eqref{mainEq0II}, in $[U^*-\varepsilon, U^*-\varepsilon+U']\times [0,\infty[$, with initial data provided by $h(U^*-\varepsilon, \,\cdot\,)$. Concatenating the two solutions we obtain a solution with existence time  $U^*-\varepsilon+U'>U^*$, in contradiction with the definition of $U^*$. In conclusion $U^*=\infty$.

The regularity and uniqueness claims follow as in the proof of Theorem~\ref{localUGlobalR}.
In turn, the decaying estimates~\eqref{boundh2II} and~\eqref{localBound}, with an $\varepsilon$ loss, follow immediately from the derived estimates for $\cE_R$, which, in view of~\eqref{globalE}, also hold if we replace $\cE_R$ by $\cE$.
To remove the $\varepsilon$ loss we just need to reuse the final  argument in the proof of~\cite[Theorem 4]{CostaProblem}.

\end{proof}


\section{Proof of Theorem \ref{main-thm}}
\label{final-sec}

\begin{proof} [Proof of Theorem \ref{main-thm}] 
	
Existence and uniqueness are an immediate consequence of Theorem~\ref{thm} (recall also Remark~\ref{remark-final-sol}). The statement about geodesic completeness follows from the estimates~\eqref{phi-convergence}--\eqref{metric-convergence3} which, as we will now show, are a consequence of~\eqref{boundh2II} and~\eqref{localBound}:

From \eqref{hhbar} and \eqref{absGestimate} we get
\begin{equation}
 |Dh|=|G(h-\bar h)| \lesssim r^{\delta}\;,
\end{equation}
and then, in view of~\eqref{localBound}, for $r\leq R$ we have
\begin{equation}
 |\partial_uh|=|Dh+\frac{1}{2}\tilde f \partial_r h| \lesssim C_R e^{-2 u}\;. 
\end{equation}
Since the last estimate is integrable in $u$, there exists $\underline{h}(\infty,r)\in\mathbb{R}$ such that 
$$\lim_{u\rightarrow \infty} h(u,r)=\underline{h}(\infty,r)\;.$$
In fact, for $r_1\leq r_2\leq R$,
$$|\underline{h}(\infty,r_2)-\underline{h}(\infty,r_1)|\leq \lim_{u\rightarrow \infty} \int_{r_1}^{r_2}|\partial_rh(u,\rho)|d\rho \leq \lim_{u\rightarrow \infty} C_R e^{-2u}=0\;,$$
and therefore $\underline{h}(\infty,r)\equiv\underline{h}(\infty)$.  Moreover
$$|h(u,r)-\underline{h}(\infty)|=\left|\int_u^{\infty}\partial_u h(s,r) ds\right|\leq C_R e^{-2u}\;$$
and 
$$|\phi(u,r)-\underline{h}(\infty)|\leq |h(u,r)-\bar{h}(u,r)|+|h(u,r)-\underline{h}(\infty)|\leq C_R e^{-2u}\;,$$
so we set $\underline{\phi}(\infty)=\underline{h}(\infty)$ and~\eqref{phi-convergenceR} follows. 

For the convergence of the metric components~\eqref{metric-convergenceR} and~\eqref{metric-convergenceR2} we just need to notice that, for $r\leq R$, in view of~\eqref{hhbar} and~\eqref{localBound}, we have 
\begin{eqnarray*}
 |f(u,r)-1| &=& |e^{\frac{1}{2}\int_0^r\frac{(h-\bar h)^2}{s}ds}-1| \\
 &\lesssim&  \sup\{f(u,r)+1\} \int_0^r\frac{(h-\bar h)^2}{s}ds \\
 &\leq& C_R  \int_0^r\frac{1}{(1+s)^{3-2\delta}}e^{-4 u}ds \\
 &\leq&  C_R e^{-4 u}\;,
\end{eqnarray*}
and then
\begin{eqnarray*}
 |\tilde f(u,r)-(1-r^2)| &\leq &\frac{1}{r} \int_0^r|(1-3s^2)||f(u,s)-1| ds \\
 &\leq& C_R e^{-4u}  \;.
\end{eqnarray*}
 Now, since $\partial_rh$ is integrable in the entire radial range, then there exists $\underline \phi(u)$ such that 
 $$\lim_{r\rightarrow\infty} h(u,r)=\underline \phi(u)\;.$$
 Using~\eqref{boundh2II}, we immediately see that
$$|h(u,r)-\underline{\phi}(u)|\leq \int_r^{\infty}|\partial_rh(u,s)|ds \lesssim \frac{1}{(1+r)^{1-\delta} }e^{-(1+\delta/2)u}$$ 
and~\eqref{phi-convergence} follows. Note that we can use the previous estimate to conclude that $\underline \phi$ is continuous.

Since $f$ is bounded and monotone in the radial variable, we can set  $f(u,\infty)=\lim_{r\rightarrow\infty}f(u,r)\in\mathbb{R}$. Then we define a new Bondi time coordinate by setting 
\begin{equation}
d\hat u = f(u,\infty) du\;,
\end{equation}
which in view of~\eqref{fEstimate} can be chosen to satisfy
\begin{equation}
\label{uhat}
u\leq \hat u \leq (1+\varepsilon) u\;,
\end{equation}
with $\varepsilon>0$ a constant that can be made arbitrarily small by decreasing $\|h'_0\|_{L^{\infty,2-\delta}_r}$.

The spacetime metric then becomes
\begin{equation}
\label{metricBondiHat}
 \metric=-\hat f(\hat u,r)\tilde{\hat f}(\hat u,r)d\hat u^{2}-2\hat f(\hat u,r)d\hat udr+r^{2}\sigma_{\mathbb{S}^2}\;,
\end{equation}
with 
$$\hat f(u,r)=\frac{ f(u,r)}{f(u,\infty)}$$
and
$$\tilde{\hat f} (u,r)=\frac{ \tilde f(u,r)}{f(u,\infty)}\;.$$
Note that the new coordinate was designed so that $\hat f(\hat u,\infty)\equiv 1$. 

We then have
\begin{eqnarray}
 \nonumber
 |\hat f(u,r)-1| &=& \frac{1}{f(u,\infty)}|f(u,r)-f(u,\infty)| \\
 \nonumber
 &\lesssim&  \left|e^{\frac{1}{2}\int_0^r\frac{(h-\bar h)^2}{s}ds}-e^{\frac{1}{2}\int_0^{\infty}\frac{(h-\bar h)^2}{s}ds}\right|  \\
  \nonumber
 &\lesssim&  \int_r^{\infty}\frac{1}{(1+s)^{3-2\delta}}e^{-2(1+\delta/2) u}ds \\
  \nonumber
 &\lesssim&  \frac{1}{(1+r)^{2(1-\delta)}}e^{-2(1+\delta/2) u} \\
 \label{metricHat}
 &\lesssim&  \frac{1}{(1+r)^{2(1-\delta)}}e^{-2(1-\varepsilon)(1+\delta/2) \hat u}\;,
\end{eqnarray}
and
\begin{eqnarray}
 \nonumber
 |\tilde{\hat f}(u,r)-(1-r^2)| &\leq &  \frac{1}{f(u,\infty)}|\tilde f(u,r)-(1-r^2)f(u,\infty)| \\
  \nonumber
 &\lesssim& \frac{1}{r} \int_0^r|(1-3s^2)||f(u,s)-f(u,\infty)| ds \\
  \label{metricHat2}
 &\lesssim& (1+r)^{2\delta}e^{-2(1-\varepsilon)(1+\delta/2) \hat u}  \;.
\end{eqnarray}
Now we can construct a diffeomorphism between our (dynamic) spacetime and the de Sitter spacetime by identifying the points with the same $(\hat u,r,\omega)$ coordinates. In that case, we write the de Sitter metric in the form   
$$ \metric^{\mathrm {dS}}=(r^2-1)\left(d \hat u-\frac{1}{r^2-1}dr \right)^2-\frac{1}{r^2-1}dr^2 + r^2 \sigma_{\mathbb{S}^2}\;,$$
set
$$e_0 =\frac{1}{\sqrt{r^2-1}}\partial_{\hat u}+\sqrt{r^2-1}\,\partial_{\hat r}\;,\;\;\;\;\;\;\;\;e_1 =\frac{1}{\sqrt{r^2-1}}\partial_{\hat u}$$
 and fix an orthonormal frame $(e_A)_{A=2,3}$ of $(\mathbb{S}^2, r^2 \sigma_{\mathbb{S}^2})$. Then $(e_0,e_1,e_A)$ forms an orthonormal frame of de Sitter and~\eqref{metric-convergence3} follows from~\eqref{metricHat} and~\eqref{metricHat2}.
\end{proof}
\appendix
\section{Proof of Proposition~\ref{LocalTimeRadius}}
\label{apProofLocal}

\begin{proof}
The proof relies on the construction of a $\|\cdot\|_{\cal U}$-contracting sequence $(h_n)$ where
$$\|w\|_{\cal U}=\sup_{(u,r)\in {\cal U}} |w(u,r)|\;,$$
and, for a given $m>0$ and $0\leq \tau\leq \tau_0$, to be specified during the proof, the domain ${\cal U}$ is of the form
\begin{equation}
\label{defU}
{\cal U}=\{(u,r)\,:\, 0\leq u\leq \tau \;,\; r\leq m(\tau-u)+\tau \}  \;.
\end{equation}
We define $h_1:{\cal U}\rightarrow \mathbb{R}$ by
$$h_1(u_1,r_1)=h_0(r_1)\;,$$
and note that $\|h_1\|_{{\cal U}}\leq \sup_{0\leq r\leq \tau_0}|h_0(r)|$ and $\|\partial_rh_1\|_{{\cal U}}\leq \sup_{0\leq r\leq \tau_0}|h'_0(r)|$.
Then the desired sequence can be constructed by setting
\begin{equation}
h_{n+1}(u_1,r_1)=h_{0}(r_n(0))e^{\int_{0}^{u_1}G_{n|_{\chi_{n}}}dv}
-\int_{0}^{u_1}\left(G_{n}\bar{h}_{n}\right)_{|_{\chi_{n}}}e^{\int^{u_1}_{u}G_{n|_{\chi_{n}}}dv}du\;,
\label{h_{n+1}_integral}
\end{equation}
where the characteristic
\begin{equation}
\label{char}
\chi_n(u;u_1,r_1)=(u,r_n(u;u_1,r_1))\;,
\end{equation}
is the integral curve of $D_n:=\partial_u-\frac{1}{2}\tilde f_n\partial_r$
through the point $(u_1,r_1)$, i.e.,
$r_n$ is the unique solution to
\begin{equation}
\label{charEq}
\frac{dr_n}{du}=-\frac{1}{2}\tilde f_n(u,r_n(u))\;\;,\; r_n(u_1)=r_1\;,
\end{equation}
with $\tilde f_n=\tilde f[h_n]$ and $G_n=G[h_n]$ defined by using \eqref{metricQuoficients} and~\eqref{G}.

Now, assume as induction hypothesis that, there exists $m,\tau>0$ such that $h_n\in C^1({\cal U})$ with
\begin{equation}
\label{induction}
 \|h_n\|_{{\cal U}}\leq C_d\;\;,\;\;\|\partial_rh_n\|_ {{\cal U}}\leq  C_d\;,
\end{equation}
for some $ C_d=C_d(\sup_{0\leq r\leq \tau_0}|h_0(r)|, \sup_{0\leq r\leq \tau_0}|h'_0(r)|)$.

Set
$$f_n(u,r)=\exp\left({\frac{1}{2}\int_{0}^{r}\frac{\left(h_n(u,s)-\bar h_n(u,s)\right)^{2}}{s}ds}\right)\;.$$
Using~\eqref{hhbar0} and the induction hypothesis,
we can choose $\tau\leq \tau_0$ such that,
for all $(u,r)\in {\cal U}$,
\begin{equation}
\label{estF}
1\leq f_n(u,r)\leq  1+C\tau\;
\end{equation}
and consequently
\begin{equation}
\label{estdF}
|\partial_rf_n(u,r)|\leq C\tau\;,
\end{equation}
where $C=C(C_d)>0$. Then by decreasing $\tau$, if necessary, we find
\begin{equation}
\label{esttF}
1-C \tau\leq \tilde f_n(u,r)\leq 1+ C \tau\;,
\end{equation}
and
\begin{equation}
\label{estG}
|G_n(u,r)|\leq C\, r\;.
\end{equation}
So, from~\eqref{esttF} and \eqref{charEq} we conclude that
\begin{equation}
\label{estChar}
r_1+\frac{1-C\tau}{2}(u_1-u)\leq r_n(u;u_1,r_1) \leq  r_1+\frac{1+C\tau}{2}(u_1-u)\;.
\end{equation}
 A priori, the domain of definition of $h_{n+1}$  is composed of the points $(u_1,r_1)\in {\cal U}$ such that $\chi_n(u;u_1,r_1)\in {\cal U}$, for all $0\leq u\leq u_1$. But by choosing $\tau$ such that $1-C\tau>0$ and by setting $m>\frac{1+ C\tau}{2}$, we can use the previous estimates~\eqref{estChar} to make sure that this domain is in fact the entire $\cal U$.

From~\eqref{h_{n+1}_integral} and the induction hypothesis, together with~\eqref{hhbar2} and~\eqref{estG}, we find that
\begin{equation}
\label{rech}
 |h_{n+1}(u_1,r_1)|\leq \sup_{0\leq r \leq \tau_0}|h_0(r)| e^{C\tau^2}+\tau Ce^{C\tau^2}\;,
\end{equation}
and after setting $C_d\geq 2 \sup_{0\leq r \leq \tau_0}|h_0(r)|+1$, we conclude that there exists 
$$\tau=\tau\left(\sup_{0\leq r \leq \tau_0}|h_0(r)|, \sup_{0\leq r \leq \tau_0}|h'_0(r)|\right)$$ 
such that
\begin{equation}
\label{ind1}
\|h_{n+1}\|_{{\cal U}}\leq C_d\;.
\end{equation}
Moreover, since $h_n\in C^1(\cal U)$, we get that $\bar h_n, G_n$ and $\tilde f_n\in C^1({\cal U})$ and consequently, by differentiating~\eqref{h_{n+1}_integral}, $h_{n+1}\in C^1(\cal U)$. 

Now, the differential form of~\eqref{h_{n+1}_integral} is
$$
D_n h_{n+1}-G_n h_{n+1}=-G_n\bar h_n\;,
$$
which, when differentiated  with respect to $r$, leads to
\begin{equation}
\begin{aligned}
 D_{n}(\partial_{r}h_{n+1})-2G_{n}\partial_{r}h_{n+1}&=\partial_rG_{n}(h_{n+1}-\bar{h}_n)-G_n\partial_r\bar{h}_n
  \\
                                                     &=-J_{n}\partial_r\bar{h}_n-\left(J_{n}-G_{n}\right)\frac{(h_{n+1}-h_{n})}{r}\;,
\end{aligned}
\label{D_partial_h_{n+1}}
\end{equation}
where
\begin{equation}
\label{J}
J_n:=3G_n+3 f_n r+(3 r^{2}-1)\frac{1}{2}\partial_r f_n\;.
\end{equation}
From~\eqref{estG},~\eqref{estF} and~\eqref{estdF}, we have the estimate
\begin{equation}
\label{estJ}
|J_n(u,r)|\leq C\, r\;,
\end{equation}
which together with~\eqref{induction},~\eqref{estF},~\eqref{estdF} and~\eqref{estG} leads to
\begin{equation}
\label{drhn}
 \begin{aligned}
\left|\partial_r h_{n+1}(u_{1},r_{1})\right|
 &\leq
 \left|h'_{0}(r_n(0))\,e^{\int_{0}^{u_1}2 G_{n|_{\chi_{n}}}dv}\right|
 +
 \\
&\quad+
\int^{u_1}_{0}\left|J_{n}\partial_r\bar{h}_n+\left(J_{n}-G_n\right)\frac{(h_{n+1}-h_{n})}{r}\right|_{|_{\chi_{n}}}e^{\int^{u_1}_{u}2 G_{n|_{\chi_n}}dv}du
\\
&\leq \sup_{0\leq r \leq \tau_0}|h'_0(r)|e^{C\tau^2}+\tau^2 Ce^{C\tau^2}\;.
\end{aligned}
\end{equation}
As before, by increasing $C_d$ and decreasing $\tau$, if necessary, we obtain
\begin{equation}
\label{ind2}
\|\partial_r h_{n+1}\|_{{\cal U}}\leq C_d\;.
\end{equation}
We have just proved that, for all $n\in\mathbb{Z}^+$, $h_n\in C^1(\cal U)$ and 
\begin{equation}
\label{inductionProved}
 \|h_n\|_{\cal U}\leq C_d\;\;,\;\;\|\partial_rh_n\|_{\cal U}\leq C_d\;.
\end{equation}
Note, moreover, that given $(u_1,r_1)\in {\cal U}$ then $\chi_n(u;u_1,r_1)\in {\cal U}$, for all $0\leq u\leq u_1$ and all $n\in\mathbb{Z}^+$.

To show that the sequence $(h_n)$ is a contraction with respect to $\|\cdot\|_{\cal U}$ we consider the evolution equation
\begin{eqnarray}
\nonumber
D_{n}\left(h_{n+1}-h_{n}\right)
&=&
G_{n}\left(h_{n+1}-h_{n}\right)
+\left(G_{n}-G_{n-1}\right)\left(h_{n}-\bar h_{n}\right)
\\
\label{evolh-h}
&\quad&
-G_{n-1}\left(\bar h_{n}-\bar h_{n-1}\right)+
\frac{1}{2}\left(\tilde f_{n}-\tilde f_{n-1}\right)\partial_r h_{n}\;.
\end{eqnarray}
Noting that $(h_{n+1}-h_{n})(0,r)=0$, then the integration along $D_{n}$ gives
\begin{eqnarray}
\nonumber
\left|\left(h_{n+1}-h_{n}\right)(u_1,r_1)\right|
&\leq&
\int_{0}^{u_1}\left|\left(G_{n}-G_{n-1}\right)\left(h_{n}-\bar h_{n}\right)\right|_{|_{\chi_{n}}}e^{\int^{u_1}_{u} G_{n|_{\chi_{n}}}dv}du
\\
\nonumber
&\quad&+
\int_{0}^{u_1}\left|G_{n-1}\left(\bar h_{n}-\bar h_{n-1}\right)\right|_{|_{\chi_{n}}}e^{\int^{u_1}_{u} G_{n|_{\chi_{n}}}dv}du
\\
\nonumber
&\quad&+
\int_{0}^{u_1}\left|\frac{1}{2}\left(\tilde f_{n}-\tilde f_{n-1}\right)\partial_r h_{n}\right|_{|_{\chi_{n}}}e^{\int^{u_1}_{u} G_{n|_{\chi_{n}}}dv}du\;.
\label{deltahInt}
\end{eqnarray}
Using the fact that (compare with~\eqref{deltaf},~\eqref{deltaTildef} and~\eqref{deltaG})
\begin{equation}
\label{unifConvFG}
\|f_n-f_{n-1}\|_{\cal U} + \|\tilde f_n-\tilde f_{n-1}\|_{\cal U} + \|G_n-G_{n-1}\|_{\cal U}\leq C \|h_n-h_{n-1}\|_{\cal U}\;,
\end{equation}
together with the induction hypothesis~\eqref{induction} and~\eqref{estG}, we obtain
$$\|h_{n+1}-h_{n}\|_{\cal U}\leq C e^{C\tau}\tau \|h_n-h_{n-1}\|_{\cal U}\;.$$
By choosing $\tau$ small enough in the last estimate, we can make the constant on the right hand side smaller than unity.

In conclusion, we have established that there exists a choice of $\tau$, just depending on initial data, such that the sequence $(h_n)$
is a $\|\cdot\|_{\cal U}$-contraction.
As a consequence, $h_n$ converges uniformly to a continuous function $h$, in $[0,\tau]^2\subset \cal U$. It then follows immediately from~\eqref{unifConvFG} that 
$f_n$, $\tilde f_n$ and $G_n$ converge uniformly to $f=f[h]$, $\tilde f=\tilde f[h]$ and $G=G[h]$, respectively.
To study the uniform convergence of the sequence of characteristics $(\chi_{n})$, we start by using  equation~\eqref{Characteristic_ODE} to obtain
$$r_{n}(u;u_1,r_1)=r_1+\frac{1}{2}\int_u^{u_1}\tilde f_{n}(s,r_{n}(s;u_1,r_1))ds\;.$$
For $r(u)=r(u;u_1,r_1)$ and  $r_{n}(u)=r_{n}(u;u_1,r_1)$, we then have from \eqref{esttF}
\begin{equation}
\begin{aligned}
\label{deltaChi}
|r_{n}(u)-r(u)|
&\leq  \frac{1}{2}\int_u^{u_1}\left| \tilde f_{n}(s,r_n(s))-\tilde f(s,r(s))\right |ds
\\
&\leq
\frac{1}{2}\int_u^{u_1}\left|\tilde f_{n}(s,r_n(s))-\tilde f_{n}(s,r(s))\right |ds
+
\frac{1}{2}\int_u^{u_1}\left | \tilde f_{n}(s,r(s))-\tilde f(s,r(s))\right |ds\;
\\
&\leq
C \tau \int_u^{u_1}|r_{n}(s)-r(s)|ds\, ds
+
\frac{1}{2}\int_u^{u_1}\left| \tilde f_{n}(s,r(s))-\tilde f(s,r(s))\right | ds\;.
\end{aligned}
\end{equation}
 From the uniform convergence of $(\tilde f_{n})$, we find that for any $\varepsilon>0$ there exists $N\in\mathbb{Z}^+$ such that for
 $n\geq N$ we have
$$\frac{1}{2}\int_u^{u_1}\left|\tilde f_{n}(s,r(s))-\tilde f(s,r(s))\right| ds \leq\varepsilon\;.$$
We thus get that
$$|r_{n}(u)-r(u)|\leq \varepsilon+C\tau \int_u^{u_1}|r_{n}(s)-r(s)|ds\;$$
and Gr\"onwall's inequality gives
$$|r_{n}(u;u_1,r_1)-r(u;u_1,r_1)|\leq \varepsilon e^{C\tau(u_1-u)}\;,$$
so that the uniform convergence of each $(\chi_{n}(\cdot;u_1,r_1))$ to $\chi(\cdot;u_1,r_1)$, on $[0,\tau]^2$, follows.

So, we conclude that~\eqref{h_{n+1}_integral} converges uniformly to
\begin{equation}
 h(u_1,r_1)=h_{0}(\chi(0))e^{\int_0^{u_1}G|_{\chi}dv}-\int_0^{u_1}\left(G\bar{h}\right)|_{\chi}e^{\int^{u_1}_{u}G|_{\chi}dv}du\,,
\label{solutionIntegral}
\end{equation}
which is a continuous solution of~\eqref{mainIVPLocalUR} with continuous $Dh$.
To see that it is in fact a $C^1$ solution requires little more work. With that goal in mind we start with:
\begin{lem}
\label{lemEqui}
Under the conditions of Lemma~\ref{LemmaIteration}, the sequence $(\partial_rh_n)$ is equicontinuous.
\end{lem}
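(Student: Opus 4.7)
The plan is to read off the equicontinuity of $(\partial_r h_{n+1})$ directly from the integral representation obtained by integrating \eqref{D_partial_h_{n+1}} along $D_n$, namely
$$\partial_r h_{n+1}(u_1,r_1) = h'_0(r_n(0))\,e^{\int_0^{u_1} 2G_n|_{\chi_n}\,dv} - \int_0^{u_1}\!\mathcal{S}_n(u)\,e^{\int_u^{u_1} 2G_n|_{\chi_n}\,dv}\,du,$$
where $\mathcal{S}_n(u)$ denotes the source appearing in \eqref{D_partial_h_{n+1}} evaluated along $\chi_n$. Equicontinuity will follow from (i) uniform (in $n$) Lipschitz dependence of the characteristics $\chi_n(\,\cdot\,;u_1,r_1)$ on the endpoint $(u_1,r_1)$, and (ii) uniform (in $n$) Lipschitz bounds in $(u,r)$ for all the ingredients that build the integrand.

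For (i), I would apply a Gronwall argument to the difference $r_n(u;u_1,r_1)-r_n(u;u_1',r_1')$ starting from the ODE \eqref{charEq}, using the uniform bound \eqref{esttF} on $\tilde f_n$ together with the uniform estimate $\|\partial_r\tilde f_n\|_{\cal U}\lesssim 1$, which follows from $\partial_r\tilde f_n=2G_n$ and \eqref{estG}. For (ii), I would first upgrade the purely spatial Lipschitz bound $\|\partial_r h_n\|_{\cal U}\leq C_d$ of \eqref{inductionProved} to a full $(u,r)$-Lipschitz bound on $h_n$ by using the defining equation $D_n h_n = G_n(h_n-\bar h_n)$; combined with the uniform bounds on $\tilde f_n$, $\partial_r h_n$, $G_n$ and $h_n-\bar h_n$ this yields $\|\partial_u h_n\|_{\cal U}\lesssim 1$. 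This in turn produces uniform Lipschitz control of $\bar h_n$ via \eqref{drbar} and, by differentiating the defining formulas \eqref{metricQuoficients}, \eqref{G}, \eqref{J}, uniform Lipschitz control of $f_n,\tilde f_n, G_n, J_n$ as well.

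Equipped with (i) and (ii), I would bound $|\partial_r h_{n+1}(u_1,r_1)-\partial_r h_{n+1}(u_1',r_1')|$ by splitting the difference into a boundary contribution, controlled using uniform continuity of $h_0'$ on the compact set $[0,\tau_0]$ (granted since $h_0\in C^k$ with $k\geq 1$) composed with the equi-Lipschitz map $(u_1,r_1)\mapsto r_n(0;u_1,r_1)$ and the smooth bounded exponential factor; and an integral contribution, split further into a bulk difference of integrands over a common interval (controlled by the uniform moduli from (i) and (ii)) and a boundary-of-integration term of size $|u_1-u_1'|$ times a uniform sup bound of the integrand. The main obstacle is book-keeping rather than analysis: the characteristic, the integrand, the upper limit and the exponential weight all shift simultaneously with the endpoint, and the argument has to be arranged so that every resulting modulus of continuity is seen to be independent of $n$. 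This is precisely the role of step (ii) and, in particular, of deriving the uniform bound on $\partial_u h_n$, which is the only way to upgrade the radial Lipschitz bound on $h_n$ into a genuine $(u,r)$-Lipschitz bound on which the rest of the estimates depend.
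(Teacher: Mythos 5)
Your overall strategy---extracting a uniform modulus of continuity for $\partial_r h_{n+1}$ directly from the integral representation along $D_n$-characteristics---is, after unwinding, close to what the paper does, but the paper organizes it more economically: it fixes the endpoint time $u_1$, forms the difference $\psi_{n+1}(u)=\partial_rh_{n+1}\circ \chi_n(u;u_1,r_2)-\partial_rh_{n+1}\circ \chi_n(u;u_1,r_1)$ along the two characteristics, derives a linear ODE $\psi_{n+1}'=2(G_n\circ\chi_n(\cdot;u_1,r_2))\,\psi_{n+1}+\hat F_n$, and applies Gronwall; equicontinuity in $u$ then comes separately, essentially for free, from the equiboundedness of $D_n\partial_rh_{n+1}$. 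This split spares the paper your book-keeping of the shifting upper limit and shifting exponential weight, and---more importantly---it only requires a uniform modulus of \emph{continuity} of the coefficient sequences, whereas you demand equi-Lipschitz bounds, which are strictly harder to obtain and in one place unobtainable.

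That is where the genuine gap lies. The source in \eqref{D_partial_h_{n+1}} is $-J_n\partial_r\bar h_n-(J_n-G_n)(h_{n+1}-h_n)/r$, so the ``ingredients of the integrand'' include $\partial_r\bar h_n=(h_n-\bar h_n)/r$ and $(h_{n+1}-h_n)/r$, not merely $\bar h_n$ and $h_n$. Your step (ii) delivers uniform $(u,r)$-Lipschitz bounds on $h_n$ and $\bar h_n$, which give only equi\emph{boundedness} of $\partial_r\bar h_n$, not a uniform modulus of continuity near the center: writing $\partial_r\bar h_n(u,r)=r^{-2}\int_0^r\rho\,\partial_rh_n(u,\rho)\,d\rho$ and taking, say, $\partial_rh_n(\rho)=\sin(n\rho)$ (equibounded, with $h_n$ equi-Lipschitz), the resulting $\partial_r\bar h_n$ oscillates by an amount of order one over intervals of length $\pi/n$ near $r=0$. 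So the uniform modulus you claim for the bulk integrand does not follow from what you establish. To close the argument one must exploit the prefactors: $|J_n|\le Cr$ by \eqref{estJ} and $(J_n-G_n)/r=-\partial_rG_n$ is equibounded (by \eqref{defJ}), so it is the \emph{products} $J_n\partial_r\bar h_n$ and $\partial_rG_n\,(h_{n+1}-h_n)$ that admit uniform moduli, the linear vanishing of $J_n$ at $r=0$ compensating the $O(1/r)$ derivative of $\partial_r\bar h_n$. The paper confronts the same terms but needs only their equicontinuity, which it infers from the equicontinuity of $(h_n)$; either way this is the step that requires an actual argument, and your proposal currently asserts a stronger (and, for $\partial_r\bar h_n$ taken alone, false) bound in its place. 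The remaining components of your plan---Gronwall for the endpoint-dependence of the characteristics via $\partial_r\tilde f_n=2G_n$ and \eqref{charEq}, uniform continuity of $h_0'$ for the boundary term, and the $|u_1-u_1'|$ contribution from the integration limit---are sound and match the paper's ingredients.
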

\begin{proof}
This proof is a detailed version of an argument given by Christodoulou in~\cite{Christodoulou:1986}. Start by noting that the sequence $(h_n)$ is equicontinuous: In fact, this follows from the fact that both sequences $(\partial_r h_{n})$ and $(D_{n-1}h_{n})$ are equibounded as a consequence of~\eqref{inductionProved}. Then, it follows that all sequences appearing in~\eqref{D_partial_h_{n+1}} are equicontinuous as well.

We will now show that $(\partial_rh_n)$ is equicontinuous with respect to $r$. In order to do that, let $r_2>r_1\ge 0$ and define
$$\psi_{n+1}(u)=\partial_rh_{n+1}\circ \chi_n(u;u_1,r_2)-\partial_rh_{n+1}\circ \chi_n(u;u_1,r_1)\;.$$
Differentiating with respect to $u$, we get from~\eqref{D_partial_h_{n+1}}
\begin{eqnarray}
\nonumber
\psi_{n+1}'(u)
&=&
D_n\partial_rh_{n+1}\circ \chi_n(u;u_1,r_2)-D_n\partial_rh_{n+1}\circ \chi_n(u;u_1,r_1)
\\
\nonumber
&=&
2(G_n\partial_rh_{n+1})\circ \chi_n(u;u_1,r_2)-2(G_n\partial_rh_{n+1})\circ \chi_n(u;u_1,r_1)
\\
\nonumber
&\quad&
+F_n\circ \chi_n(u;u_1,r_2)-F_n\circ \chi_n(u;u_1,r_1)
\\
\label{psi'}
&=&
\left(2G_n\circ \chi_n(u;u_1,r_2)\right)\,\psi_{n+1}(u)+\hat F_n(u)\,,
\end{eqnarray}
where
$$F_n:=-J_{n}\partial_r\bar{h}_{n}-\left(J_{n}-G_{n}\right)\frac{(h_{n+1}-h_{n})}{r}\;,$$
and
\begin{eqnarray*}
	\hat F_n(u)
	&=&
	F_n(u)\circ \chi_n(u;u_1,r_2)-F_n\circ \chi_n(u;u_1,r_1)
	\\
	&\quad&
	+2\left[G_n\circ \chi_n(u;u_1,r_2)-G_n\circ \chi_n(u;u_1,r_1)\right][\partial_rh_{n+1}\circ \chi_n(u;u_1,r_1)]\;.
\end{eqnarray*}
Then, integrating~\eqref{psi'} and using~\eqref{estG} we see that
\begin{equation}\label{intPsi}
|\psi_{n+1}(u)|\leq \tilde C |\psi_{n+1}(0)|+ \tilde C \int_0^u |\hat F_n(u)|du\;,
\end{equation}
with $\tilde C>0$ constant. 

Now let $\varepsilon>0$. Since $\hat F_n$ is constructed out of the sum, product and composition of equicontinuous sequences and the equibounded sequence $(\partial_rh_{n})$, we see that there exists $\varepsilon_0$ sufficiently small and independent of $n$ such that, for $|r_2-r_1|<\varepsilon_0$, we have
$$\hat F_n(u)\leq \frac{\varepsilon}{2\tau \tilde C}\;.$$
The regularity of the initial data allows us to conclude that, by decreasing $\varepsilon_0$ if necessary, we have
$$|\psi_{n+1}(0)|=|\partial_rh_0(0,r_2)-\partial_rh_0(0,r_1)|<\frac{\varepsilon}{2 \tilde C}\;.$$
Then,~\eqref{intPsi} gives
$$|\psi_{n+1}(u)|\leq \varepsilon\;, $$
and the equicontinuity of $(\partial_rh_n)$ with respect to $r$ follows.

To finish, one just needs to observe that since $(D_{n+1}\partial_rh_n)$ is equibounded according to the proof of~\eqref{inductionProved}, then $(\partial_rh_n)$ is also equicontinuous with respect to $u$.
\end{proof}
Now, arguing as in~\cite{Christodoulou:1986}, since the sequence $(\partial_rh_n)$ is equicontinuous, then the Arzel\`a-Ascoli Theorem guarantees the existence of a subsequence, that we also denote by $(\partial_r h_{n})$, converging uniformly, on $[0,\tau]^2$, to a continuous function $w$. If we then consider
$$\hat h(u,r)= h(u,0)+\int_0^r w(u,s)\,ds\;,$$
and write
$$h_n(u,r)= h(u,0)+\int_0^r \partial_r h_n(u,s)\,ds\;,$$
it becomes clear that $\|h_n-\hat h\|_{\cal U}\rightarrow 0$, as $n\rightarrow \infty$, from which we see that $\hat h= h$ and therefore $\partial_r  h=\partial_r \hat h =w$. We have thus concluded that $h$ is a $C^1$ solution of~\eqref{mainIVPLocalUR}.

We can show that the solution is as regular as the initial data by differentiating the integral version of~\eqref{mainEq}  (compare with~\cite[(17)]{CostaSpherically}). The proof of uniqueness is similar, although simpler, to the proof of Proposition~\ref{propUniq} (compare with~\cite[(59)]{CostaProblem}). 

\end{proof}

\section*{Acknowledgments}
The authors thank Rodrigo Duarte for his comments and careful reading of the paper, the Erwin Schr\"odinger International Institute for Mathematical Physics, ESI, where part of this work has been done, FCT project PTDC/MAT-ANA/1275/2014 and CAMGSD, IST, Univ. Lisboa, through FCT project UID/MAT/04459/2019. FCM also thanks CMAT, Univ. Minho, through FCT project Est-OE/MAT/UIDB/00013/2020 and FEDER Funds COMPETE.

\end{document}